\documentclass[11pt]{article}
\usepackage[margin=1in]{geometry}
\setlength{\parskip}{4pt}
\setlength{\parindent}{7mm}

\usepackage{amsmath}
\usepackage{amsthm}
\usepackage{amssymb}
\usepackage{amsfonts}
\usepackage{bm}
\usepackage{bbm}
\usepackage{color}
\usepackage{graphicx}
\usepackage{hyperref}
\usepackage{fancyhdr}
\usepackage{graphicx}
\usepackage{slashed}
\usepackage{latexsym,epsfig}
\usepackage{algorithm,algpseudocode}	
\usepackage{enumerate}
\usepackage{mathtools}
\usepackage{physics}
\usepackage[capitalize]{cleveref}
\usepackage{subcaption}
\usepackage{tcolorbox}
\usepackage{authblk}

\definecolor{blue}{rgb}{0,0.2,1}
\definecolor{Pr}{rgb}{0.4,0.3,0.9}

\newcommand{\Var}{\mathrm{Var}}

\theoremstyle{plain}

\newtheorem{problem}{Problem}
\newtheorem{theorem}{Theorem}
\newtheorem{prop}{Proposition}
\newtheorem{lemma}{Lemma}

\newtheorem{defn}{Definition}

\def\be{\begin{eqnarray}}
\def\ee{\end{eqnarray}}

\DeclareMathOperator{\EE}{\mathbb{E}}

\begin{document}
\title{Low depth amplitude estimation without really trying}
\author{Dinh-Long Vu, Bin Cheng, Patrick Rebentrost}
\affil{Centre for Quantum Technologies, National University of Singapore, Singapore, Singapore}
\maketitle
\begin{abstract}
    Standard quantum amplitude estimation algorithms provide quadratic speedup to Monte-Carlo simulations but require a circuit depth that scales as inverse of the estimation error. In view of the shallow depth in near-term devices, the  precision achieved by these algorithms would be low. In this paper we bypass this limitation by performing the classical Monte-Carlo method on the quantum algorithm itself, achieving a higher than classical precision using low-depth circuits. We require the quantum algorithm to be weakly biased in order to avoid error accumulation during this process. Our method is parallel and can be as weakly biased as the constituent algorithm in some cases.
\end{abstract}
\section{Introduction}

Quantum computing can simply be viewed as the task of executing a quantum circuit with measurements on some qubits in the end. The goal of amplitude estimation is to estimate the probability of getting all these measured qubits in a certain configuration, for example the last qubit in the state $|1\rangle$. Such general formulation is perhaps amplitude estimation's biggest strength: aside from unitarity, the circuit can literally be quite anything. In particular, Montanaro observed \cite{Montanaro_2015} that if a random variable can be encoded by the circuit then the above probability is nothing but the corresponding expected value. As a result, amplitude estimation has promising applications in problems where obtaining the expected value, especially with high precision is the main focus. Several use-cases have been demonstrated in finance, including option pricing \cite{PhysRevA.98.022321,Stamatopoulos_2020} and risk management \cite{Woerner_2019,9259208}.

How is the probability estimated? The straightforward way is to repeat the experiment a number of times, record the outputs and take the average. If we want to achieve a precision of $\epsilon$ then this classical Monte-Carlo method requires $O(1/\epsilon^2)$ samples of the original circuit. Such query complexity is definitely not worth the effort of quantum encoding a random variable, which is already a challenging task by itself for most applications. By sequentially concatenating the controlled versions of the original circuit to form a deeper one, it was shown in \cite{Brassard_2002} that the query complexity can be reduced to $O(1/\epsilon)$, which turns out to be the best asymptotic that a quantum algorithm can achieve. One can think of the classical Monte-Carlo method as vertical stacking and the algorithm of \cite{Brassard_2002} as horizontal stacking. There are also other algorithms \cite{Suzuki_2020,Aaronson_2020,Grinko_2021} with optimal query complexity but with simpler circuit architecture. 

Despite the quadratic speedup, all these algorithms come with a catch: the maximum circuit depth (measured in unit of the original circuit depth) is of the same order as the query complexity itself, that is $O(1/\epsilon)$. Indeed, the original algorithm of \cite{Brassard_2002} makes use of a single concatenated circuit so the maximum depth and the query complexity are identical. Other algorithms involve multiple circuits the depths of which form a geometric series therefore the largest depth differs from the query complexity only by an $O(1)$ multiplicative factor. Given the shallow depth limitation of near term quantum devices, the application of amplitude estimation would be limited to low-precision tasks only, which can be considered a waste of potential.

An interesting direction was proposed in \cite{Giurgica_Tiron_2022}: the idea is to trade some of the quadratic speedup for a lower depth. How much do we need to compromise? Denote respectively by $D$ and $N$ the maximum circuit depth and the total query complexity. It was shown in \cite{burchard2019lowerboundsparallelquantum} that their product cannot be smaller than $O(1/\epsilon^2)$. In other words, the optimal speedup is nothing but the maximum circuit depth itself. When $D=O(1/\epsilon)$ we recover the quadratic speedup of standard quantum algorithms whereas taking $D=1$ brings us back to the classical Monte-Carlo method. The term \textit{low-depth} refers to the ability to smoothly interpolate between the two paradigms along the optimal curve $DN=O(1/\epsilon^2)$. Befitting such compelling behaviour, finding low-depth algorithms appears to require somewhat exotic techniques and intricate manipulations. So far two attempts with provable correctness are known. The first one \cite{Giurgica_Tiron_2022} makes use of the Chinese remainder theorem. The idea is to transform the amplitude estimation problem into a problem of finding the remainder when divided by an integer of order $O(\epsilon^{-1})$ chosen in such a way that it can be factorized into co-prime factors of roughly equal size. The remainder modulo each factor can be computed using shallow-depth algorithm and then recombined using the Chinese remainder theorem. The main drawback of this algorithm is that it can only set the maximum depth to be some $n$-th root of $\epsilon^{-1}$, like $\epsilon^{-1/2},\epsilon^{-1/3}$ and so on, as a result of the equal size factorization. Deviating from these discrete values breaks the optimal condition. The second one \cite{Rall_2023} relies on the so-called semi-Pellian polynomials, which contain Chebyshev polynomials as a subclass. Since Chebyshev polynomials correspond to Grover's rotations, the idea is to instead use semi-Pellian polynomials to explore the phase space more efficiently. The low depth requirement forces these  polynomials to have low degrees, thus low predictive power and as a result more samples are required. We observed that there is an error in this algorithm however and in \cref{sec: rall-fuller} we present a modification that fixes the issue. Even with our fix, the construction of these low-degree polynomials is only possible when the maximum circuit depth is at least of order $O(1/a)$ where $a$ is the true value of the amplitude, see \cref{subsec: Rall-Fuller limitation} for more detail on this point. In the extreme case where the true amplitude happens to be $0$, the algorithm of \cite{Rall_2023} ceases completely to be low depth and becomes a standard algorithm. 
 
In this paper we propose a simple method to obtain low-depth algorithms. If we take the classical Monte-Carlo method as vertical and standard quantum algorithms as horizontal in a very literal sense, then interpolating the two simply amounts to moving horizontally and vertically at the same time. In other words, what we are going to do is to run the classical Monte-Carlo method on standard quantum algorithms themselves. See \cref{fig:main-idea} for an illustration of this idea.
\begin{figure}[ht]
 \centering
 \includegraphics[width=0.6\linewidth]{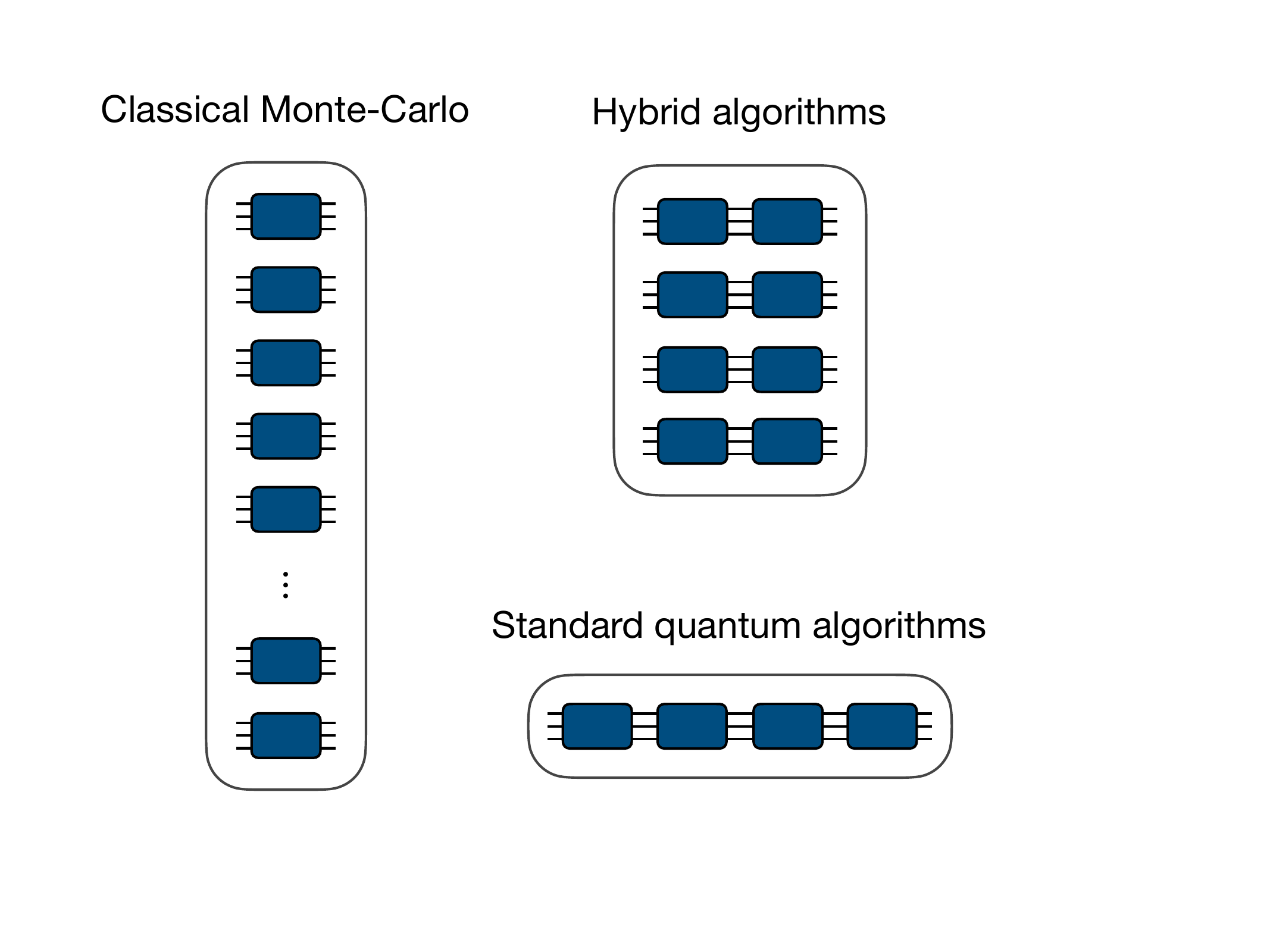}
 \caption{A hybrid construction of low-depth algorithms. It avoids the worst of both worlds: a lower depth compared to standard quantum amplitude estimation algorithm and a smaller query complexity compared to the classical Monte-Carlo method.}
 \label{fig:main-idea}
\end{figure} 

Not all quantum algorithms can be used as Monte-Carlo samples however. Due to the law of large numbers, the bias and the precision of the hybrid estimator are roughly equal. Since the bias is preserved by the Monte-Carlo method, it means that the constituent quantum algorithm, despite its low depth, must have a bias at least as small as the precision of a quantum algorithm with a larger depth. Put differently, the maximum depth of the constituent quantum algorithm must have a weak dependence on it's bias. Algorithms with this property are often referred to as \textit{unbiased}, a concise yet potentially misleading name. Several examples can be found in \cite{vanapeldoorn2022quantumtomographyusingstatepreparation,cornelissen_sublinear-time_2023}. The purpose of this paper is to show that

\textbf{Claim}: \textit{Any unbiased amplitude estimation algorithm can be turned into a low-depth algorithm.}

Our method works equally well for the task of phase estimation. In particular we present three low-depth amplitude/phase estimation algorithms corresponding to the three existing unbiased amplitude/phase estimation algorithms found in the literature. Our method is parallel and in some cases it can even inherit the unbiasedness of the constituent algorithm. 

Up until this point unbiased algorithms have been created to serve different purposes. The one in \cite{vanapeldoorn2022quantumtomographyusingstatepreparation} is used in state tomography where the idea is that unbiased estimations of the density matrix elements lead to a smaller $l_q$-norm error ($q\geq 2$) of the state.  The one in \cite{cornelissen_sublinear-time_2023} is used to compute a particular graph partition function. The partition function is written as a telescoping product and it was shown that unbiased estimations of the individual terms lead to a more accurate estimate of the product. Here we bring all unbiased algorithms under the same roof by showing that they can all be used to obtain high precision estimates from limited-depth hardware. Overall one can say that unbiased algorithms are not interesting by themselves, but in situations where we have to combine estimators in one way or another then having unbiased property might be what we need, and all we need. 

The paper is structured as follows. In \cref{sec:motivation}, we provide a more detailed context that motivated our work. In \cref{sec:unbiased_to_low_depth} and \cref{sec:low_depth_phase_estimation}, we provide the general protocol for building low-depth algorithms from unbiased ones, including both amplitude and phase estimation. In \cref{sec:examples}, we provide some concrete examples of this construction. Finally, \cref{sec:conclusion} concludes and discusses future directions.

\section{Motivation}
\label{sec:motivation}

\subsection{Hardware limitation and low-depth algorithms}
Imagine that we are faced with an estimation problem which requires an additive precision of $\epsilon_\circ$ of the amplitude $a_\circ \in [0,1]$ with success probability at least $1-\delta_\circ$. We are equipped with a quantum hardware that can handle at most $D_{\text{hw}}$ Grover operators sequentially. In terms of the algorithm, we have at our disposal an quantum amplitude estimation (QAE) algorithm $\text{QAE}(a,\epsilon,\delta)$ which can output an estimator with additive precision $\epsilon$ and success probability at least $1-\delta$ using a maximum of $D_{\text{QAE}}(a,\epsilon,\delta)$ consecutive Grover operators, which we refer to as the depth of the circuit. 
We take a user-centric point of view in which the algorithm is considered a black box and all we know about it is the maximum depth $D_\text{QAE}$ it uses. We try to answer the following question: \textit{can we achieve the desired precision and success probability using the provided hardware and algorithm while still maintaining some quantum advantage?}

Here, quantum advantage means that the total query complexity is superior than that of classical Monte Carlo for the same precision and success probability. In principle, we know that $D_\text{QAE}(a,\epsilon,\delta) = \tilde{O}(\log (1/\delta)/\epsilon)$, where $\tilde{O}(\cdot)$ suppresses polylog factors in $\epsilon^{-1}$. Even though we do not know $a$, assume that the dependence on $a$ is slow-varying so that we can have a rough estimate $D_\text{QAE}(a_\circ,\epsilon, \delta_\circ)\approx c/\epsilon$ with a certain constant $c$. 
However, the hardware limitation $D_{\rm hw}$ imposes an upper bound on the achievable precision, given by $\epsilon_{\rm hw} := c/D_{\rm hw}$, leading to $\epsilon \geq \epsilon_{\rm hw}$.

\begin{figure}[t]
    \centering
    \includegraphics[width=0.4\linewidth]{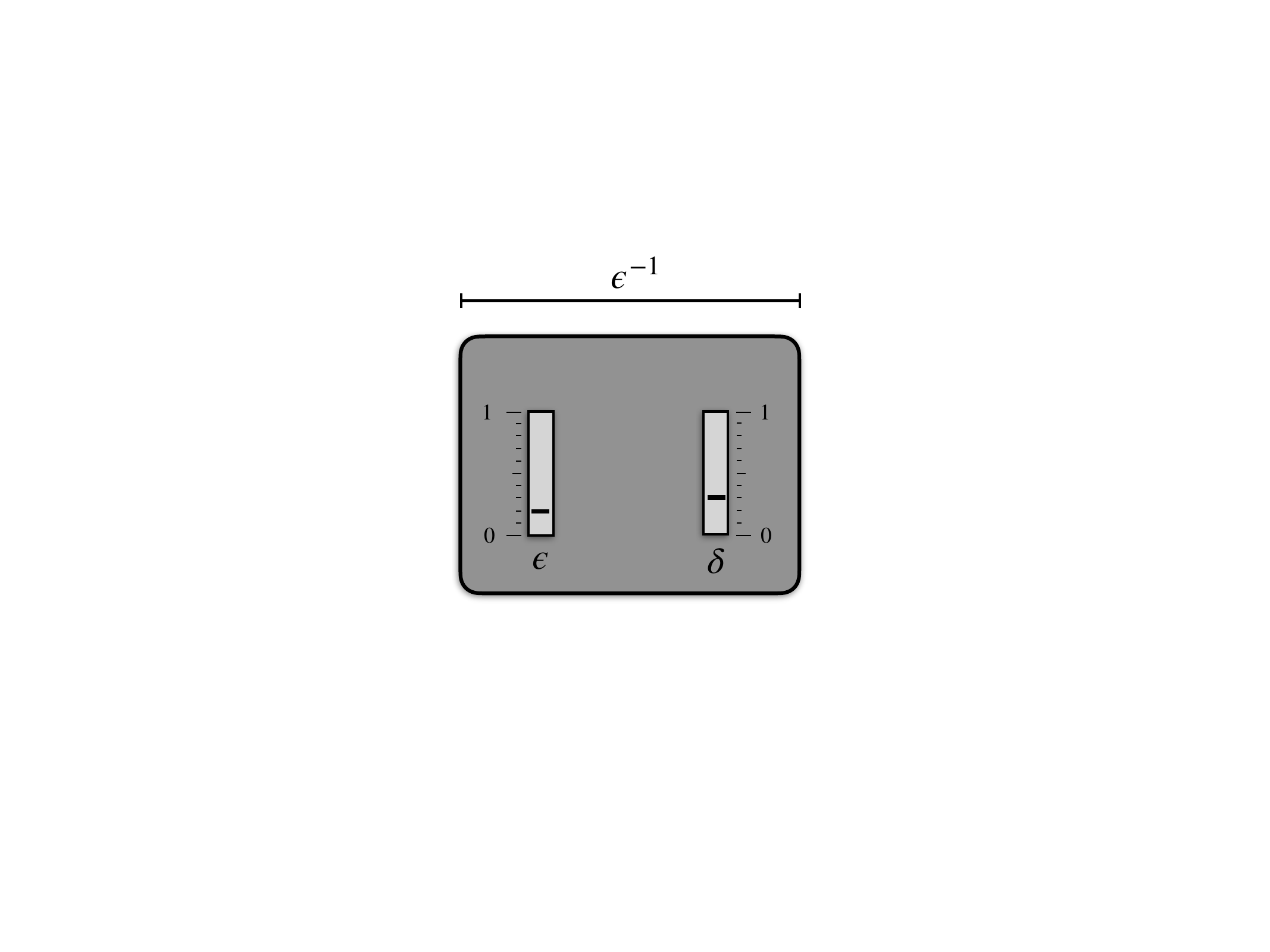}
    \caption{A standard black box QAE}
    \label{fig:black-box}
\end{figure}

For illustration, if $D_\text{hw}=10$ and $c=1$, then we cannot go below the precision of $0.1$. We assume the desired precision $\epsilon_\circ < \epsilon_\text{hw}$ in the following and we let $\beta\in (0,1]$ be a number such that
\begin{align}
    \epsilon_{\text{hw}} = \epsilon_\circ^{1-\beta}.
\end{align}
This number quantifies the limitation of our hardware: smaller $\beta$ corresponds to deeper hardware. Of course one can use the median trick to boost the success probability.
So in the definition of $\beta$, the failure probability $\delta$ is implicitly taken to be a constant smaller than $1/2$, say, $1/3$.
That is, $\epsilon_\text{hw}$ is the solution of the equation
\begin{align}
    D_\text{hw}= D_\text{QAE}(a_\circ, \epsilon_\text{hw},1/3).
\end{align}
However, at this stage we just want to convey the general idea instead of making precise statements.
So, $\epsilon_\text{hw}$ can simply be understood as the inverse circuit depth.
\begin{figure}[t]
    \centering
    \includegraphics[width=0.6\linewidth]{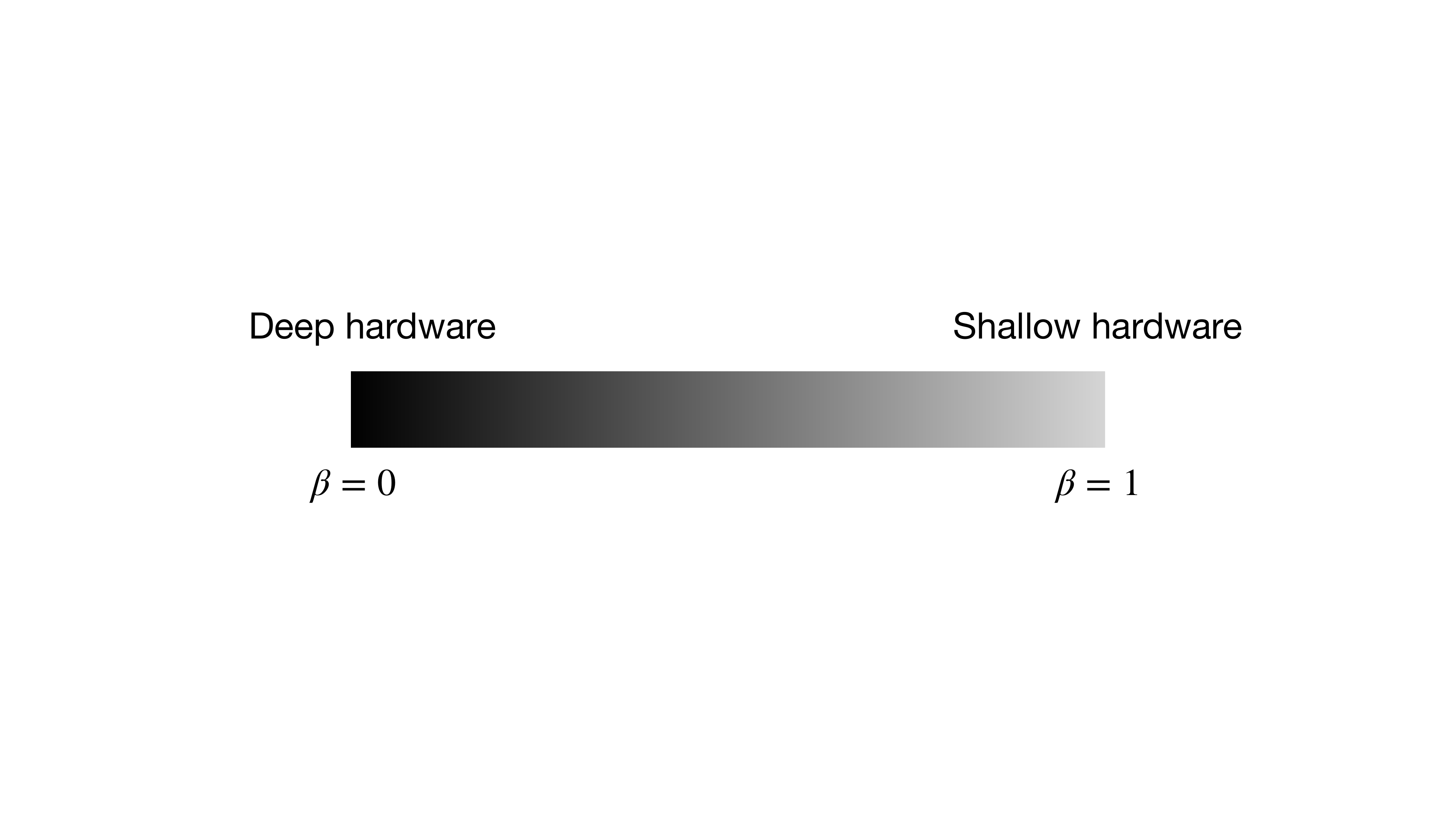}
    \caption{Hardware depth characterised by a single number $\beta$.}
    \label{fig:beta}
\end{figure}

This discussion motivates us to define low-depth algorithms, where for given values of $\epsilon$ and $\delta$ we have the ability to tune the maximum circuit depth through a parameter $\beta\in [0,1]$ in order to fit our hardware. According to \cite{burchard2019lowerboundsparallelquantum}, the maximum circuit depth $D$ and the total query complexity $N$ should satisfy the optimal condition $DN= O(\epsilon^{-2})$.
\begin{defn}
    Low-depth quantum amplitude estimation is an algorithm that can output an estimator with additive precision $\epsilon$ and success probability at least $1-\delta$ using a maximum circuit depth $D= \Tilde{O}(\epsilon^{-1+\beta})$ and a total query complexity $N= \Tilde{O}(\epsilon^{-1-\beta}\log(1/\delta))$ for any value of $\beta\in [0,1]$.
\end{defn}
\begin{figure}[ht]
     \centering
     \begin{subfigure}[t]{0.4\linewidth}
         \centering
         \includegraphics[width=0.9\linewidth]{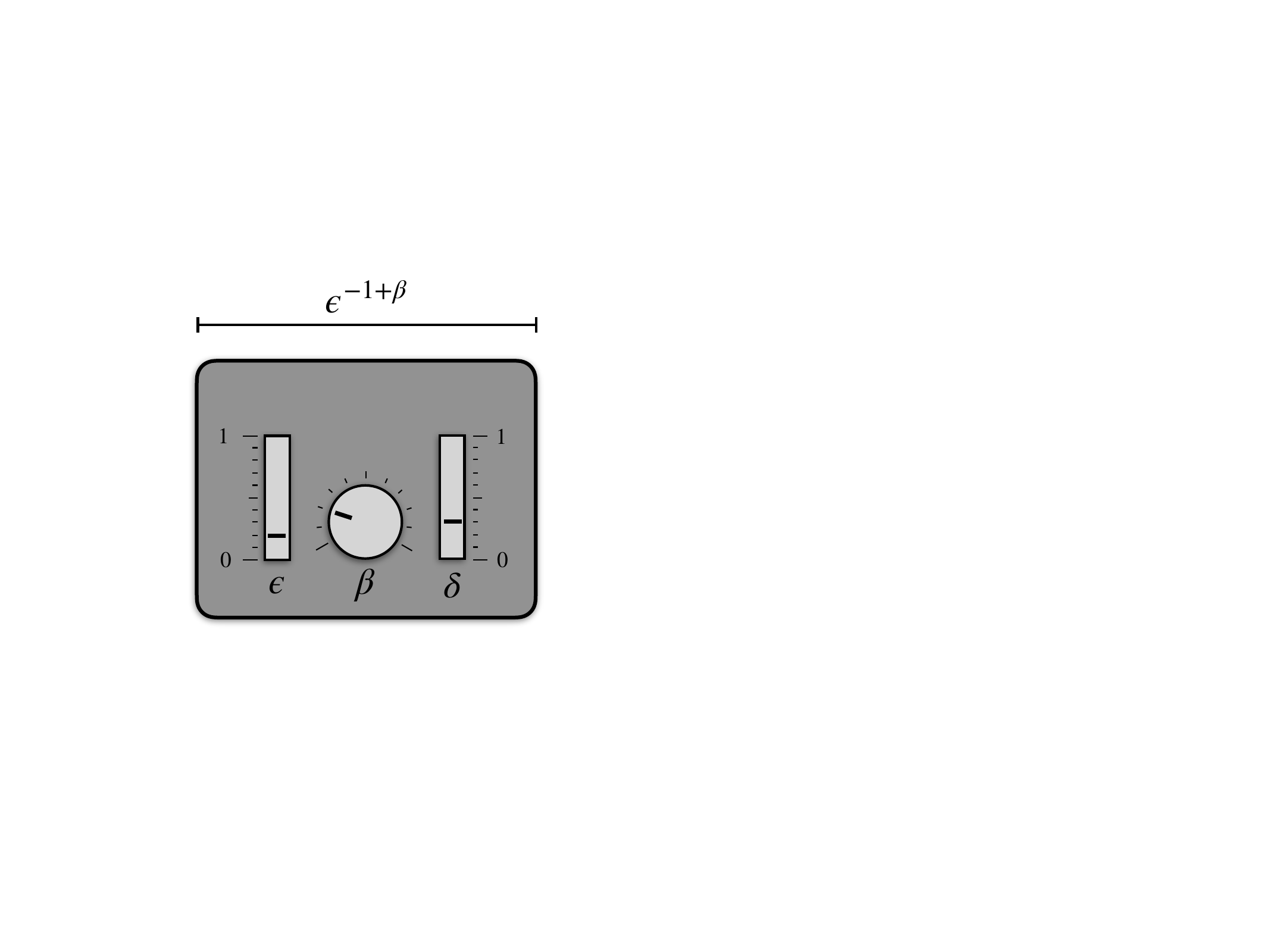}
         \caption{Low-depth QAE}
         \label{Low-depth QAE}
     \end{subfigure}
     ~ 
     \begin{subfigure}[t]{0.56\linewidth}
         \centering
         \raisebox{-1mm}{\includegraphics[width=0.82\linewidth]{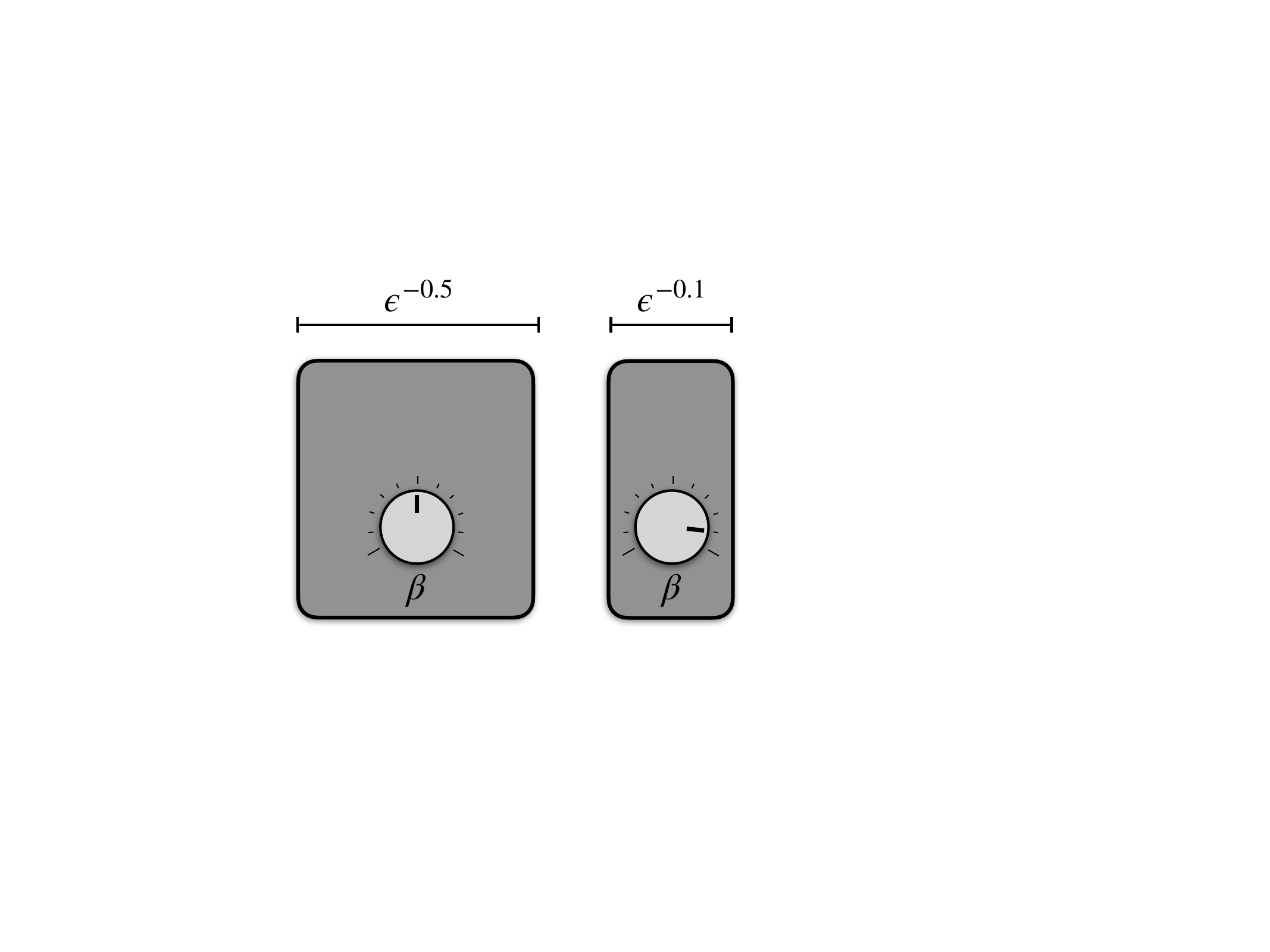}}
         \caption{Tuning $\beta$}
         \label{tuning}
     \end{subfigure}
     \caption{Low-depth QAE comes with an extra controlling knob that can tune the maximum depth of the algorithm without compromising $\epsilon$ and $\delta$, allowing it to fit into arbitrary hardware.}
\end{figure}
With the introduction of low-depth algorithms, we can recast our question into an equivalent form, which is the main focus of this work

\textbf{Question}: \textit{Can we use a black box QAE to form a low-depth QAE?}
\subsection{Aggregating the black boxes}
The simplest way to achieve a low-depth QAE is to run a standard black box $\text{QAE}(\epsilon_\circ^{1-\beta},\delta_\circ/T)$ independently for a total of $T$ times and try to combine the low-precision outputs to form a more accurate estimator. The failure probability has been divided by $T$ so that the final probability of success can be bounded from below by $1-\delta_\circ$ by the union bound. Of course we can also try to do this adaptively but doing so often requires us to ``open the black box". Moreover, independent runs allow for parallelization which has obvious computational advantage.
So, we focus on independent runs in this work.
For each $i\in \lbrace 1,2,...,T\rbrace$, $\text{QAE}(\epsilon_\circ^{1-\beta},\delta_\circ/T)$ outputs $\tilde{a}_i$ such that
\begin{align}
    P[|\tilde{a}_i-a|\leq \epsilon_\circ^{1-\beta}]\geq 1-\delta_\circ/T
\end{align}
In the event that $|\tilde{a}_i-a|\leq \epsilon_\circ^{1-\beta}$ for all $i\in \lbrace 1,2,...,T\rbrace$, which happens with probability at least $1-\delta_\circ$, we can try to construct a function $\hat{a}=F(\tilde{a}_1,\tilde{a}_2,...,\tilde{a}_T)$ and hope that $\hat{a}$ is the desired estimator, i.e., $|\hat{a}-a|\leq \epsilon_\circ$. 

This hope proves to be short-lived in view of the fact that what we have is a black box, and nothing forbids it to be a worst-case black box. 
As a simple example, consider what we call a monkey algorithm, defined to be one that outputs $a-\epsilon$ surely. Independent runs of this algorithm give the same estimator and nothing can be done to improve its accuracy. The monkey algorithm does what it is asked to do with minimal information since its output has zero entropy.

The monkey algorithm is a hypothetical concept that we use to convey the idea that if all we know about an algorithm is that it outputs some value in $[a-\epsilon,a+\epsilon]$ with probability at least $1-\delta$ (like the vanilla QAE algorithms), then that information alone is not enough for us to form high-precision estimators from low-precision ones. The algorithm might output near an extremal point of the interval, concealing valuable information around its center.  The existence of such idiosyncratic algorithms does not violate any principles.
Overall, it can be seen that \emph{vanilla QAE algorithms may not be sufficient to be combined to form a low-depth QAE}.

The class of vanilla QAE algorithms is too large for our question to have a positive answer and we need to impose some restrictions. The goal of this work is to show that

\textbf{Main result}: \textit{Only unbiased QAE can be used to form a low-depth QAE.}

We emphasize that the terminology ``unbiased" should be taken with care one since no quantum amplitude/phase estimation algorithm in the literature is truly unbiased. What is desired in this work is that the dependence of the maximum circuit depth and total query complexity on the bias is weak, say polylogarithmic in the bias. 
We will give two prototypes of unbiased QAE algorithms in \cref{sec:unbiased_to_low_depth}.
But before that, we first give an intuitive explanation why the unbiased property is required.

\subsection{Why unbiased?}
Simply put, the admissible class of QAE algorithms needs to be sufficiently restricted that even the monkey algorithm of that class can be used to form a low-depth QAE. Now for monkey algorithms the bias and the precision are equal so requiring the latter to be small automatically imposes the same condition on the former. A less \textit{ad-hoc} argument is the following. Let's say we have run an admissible algorithm a number of times and have obtained $\tilde{a}_1,...,\tilde{a}_T$ as low-precision estimators from which we construct $\hat{a}=F(\tilde{a}_1,...,\tilde{a}_T)$ which has high precision. $F$ can be any function but the simplest one is the sample mean $\bar{a} := (\tilde{a}_1+...+\tilde{a}_T)/T$. The law of large numbers says that for large $T$, the sample mean becomes very close to $a-B[\tilde{a}]$ (or $a+B[\tilde{a}]$) where $B[\tilde{a}]:= |\EE[\tilde{a}]-a|$ is the bias. Requiring the precision of $\bar{a}$ to be small again imposes the same condition on the bias $B[\tilde{a}]$.
\begin{figure}[ht]
    \centering
    \includegraphics[width=0.6\linewidth]{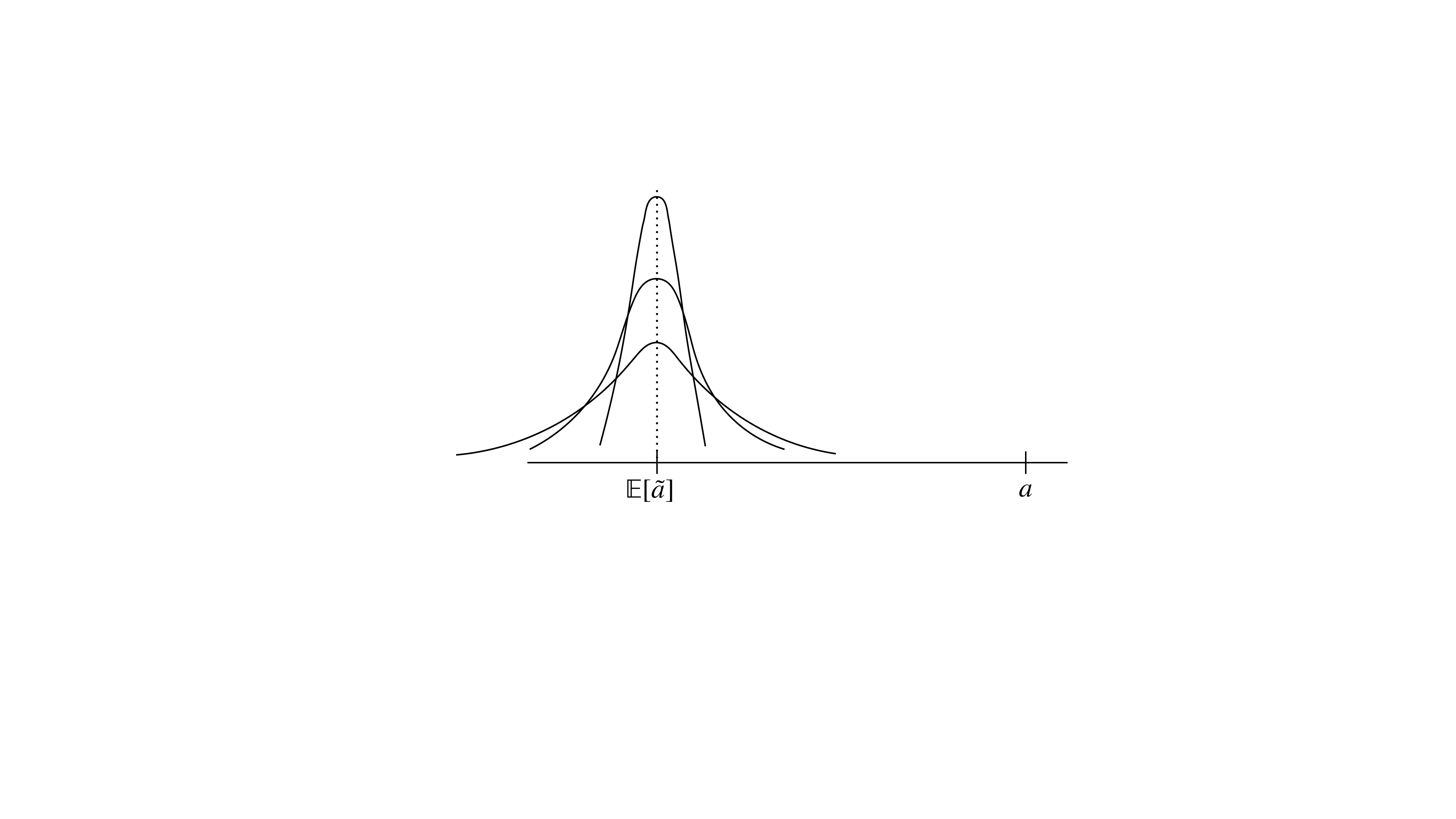}
    \caption{The convergence of the sample mean.}
    \label{fig:monki}
\end{figure}

Overall, our argument is that any admissible class of QAE algorithms might contain small variance algorithms (with the monkey as an extreme example), or the process of forming the low-depth algorithm itself induces a reduction in variance (as in the case of the sample mean). When the variance is small, the bias is required to be small as well otherwise the precision will be low. The mean estimator not only serves as an heuristic for the necessary part of our claim but also is the construction for sufficiency.

\section{From unbiased QAE to low-depth QAE}
\label{sec:unbiased_to_low_depth}

In this section, we show that any unbiased QAE can be used to form a low-depth QAE. We first provide two prototypes that conceptualize the available unbiased QAE algorithms in the literature. Some concrete examples will be given in \cref{sec:examples}. We denote the maximum circuit depth by $D$ and total query complexity by $N$ for the rest of the paper.
\begin{defn}[Unbiased QAE type I]
\label{prototype-I}

Given any $B,V>0$, an unbiased QAE of type I (UQAE-I) is an algorithm that can output an estimator $\tilde{a}$ of $a$ such that
\begin{align}
    |\EE[\tilde{a}]-a|\leq B \quad \text{and}\quad \Var[\tilde{a}]\leq V,
\end{align}
with maximum depth and total query complexity 
\begin{align}
    D_\text{UQAE-I}(B,V) = \tilde{O}(V^{-1/2})\quad \text{and}\quad N_\text{UQAE-I}(B,V) = \tilde{O}(V^{-1/2}),
\end{align}
where $\tilde{O}$ suppresses polylogarithmic terms in $V^{-1}$ and $B^{-1}$.
\end{defn}
\begin{defn}[Unbiased QAE type II] 
\label{prototype-II}
Given any $B,\epsilon,\delta>0$, an unbiased QAE of type II (UQAE-II) is an algorithm that can output an estimator $\tilde{a}$ of $a$ such that
\begin{align}
    |\EE[\tilde{a}]-a|\leq B \quad \text{and}\quad P[|\tilde{a}-a|\leq \epsilon]\geq 1-\delta,
\end{align}
with maximum depth and total complexity
\begin{align}
    D_\text{UQAE-II}(B,\epsilon,\delta) = \tilde{O}(\epsilon^{-1})\quad \text{and}\quad N_\text{UQAE-II}(B,\epsilon,\delta) = \tilde{O}(\epsilon^{-1}\log(\delta^{-1})),
\end{align}
where $\tilde{O}$ suppresses polylogarithmic terms in $\epsilon^{-1}$ and  $B^{-1}$.
\end{defn}

We show that for both types of UQAE, simply running the algorithm independently a number of times and 
taking the sample mean gives us a low-depth algorithm. The main idea is to set the bias to be of the same order as the desired accuracy and then use Chebyshev (for type I) or Hoeffding inequality (for type II) to bound the success probability.
\subsection{Type I: bias and variance}

\begin{problem} 
\label{problem-1}
For any $\beta\in [0,1]$, find $B,V,T$ such that running a black box $\text{UQAE-I}(B,V)$ as defined in \cref{prototype-I} independently $T$ times and averaging the outputs $\tilde{a}_i$ gives an estimator $\bar{a}:=\frac{1}{T} \sum_{i=1}^T \tilde{a}_i$ satisfying
\begin{align}
    P[|\bar{a}-a|\leq \epsilon_\circ]\geq 1-\delta_\circ
\end{align}
with circuit depth $D = \tilde{O}(\epsilon_\circ^{-1+\beta})$ and query complexity $N = \tilde{O}(\epsilon_\circ^{-1-\beta}\log(\delta_\circ^{-1}))$.
\end{problem}
By definition, solving this problem constitutes a protocol to build low-depth algorithms from black box UQAE-I.

\textbf{Analysis}: Since independent runs of the black box uses the same parameters $B$ and $V$, the maximum circuit depth of the aggregated algorithm is equal to that of each individual black box. Given that $D_\text{UQAE-I}(B,V) = \tilde{O}(V^{-1/2})$, requiring $D = \tilde{O}(\epsilon_\circ^{-1+\beta})$ means that we have to set $V= O(\epsilon_\circ^{2-2\beta})$. With the scaling of $V$ obtained, we can now deduce the scaling of $T$. The variance of the sample mean can be bounded as follows
\begin{align}
    \Var[\bar{a}] = \frac{1}{T^2}\sum_{i=1}^T\Var[\bar{a}_i]\leq \frac{V}{T}.
\end{align}
If we want the confidence interval of $\bar{a}$ to be of length $\epsilon_\circ$, then the variance of $\bar{a}$ needs to be of order $\epsilon_\circ^2$. As a consequence $T= 
 O(\epsilon_\circ^{-2\beta})$. Now that we have an adequately small variance, we just need to make sure that the bias is also under control. The following lemma makes precise this idea
 
\begin{lemma}[Bias and variance] 
\label{lemma:bias_and_variance}
Let $r\in (0,1)$ and $s$ such that $s<\frac{1}{2}(1-r)^2$. If an estimator $\hat{a}$ of $a$ satisfies
\begin{align}
    |\EE[\hat{a}]-a|\leq r\epsilon \quad \text{and}\quad \Var[\hat{a}]\leq s\epsilon^2.
\end{align}
Then 
\begin{align}
    P[|\hat{a}-a|\leq \epsilon]>\frac{1}{2}.
\end{align}
\end{lemma}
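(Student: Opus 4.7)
The plan is to combine the triangle inequality (to handle the bias) with Chebyshev's inequality (to handle the variance). Set $\mu := \EE[\hat{a}]$, so the bias assumption reads $|\mu - a| \le r\epsilon$.

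First I would use the triangle inequality $|\hat{a} - a| \le |\hat{a} - \mu| + |\mu - a|$ to observe that the event $\{|\hat{a} - \mu| \le (1-r)\epsilon\}$ is contained in the event $\{|\hat{a} - a| \le \epsilon\}$, since on the former,
\begin{equation*}
    |\hat{a} - a| \le (1-r)\epsilon + r\epsilon = \epsilon.
\end{equation*}
Consequently $P[|\hat{a} - a| \le \epsilon] \ge P[|\hat{a} - \mu| \le (1-r)\epsilon]$, which reduces the problem to bounding the deviation of $\hat{a}$ from its own mean.

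Next I would invoke Chebyshev's inequality on $\hat{a}$ around $\mu$: since $\Var[\hat{a}] \le s\epsilon^2$,
\begin{equation*}
    P[|\hat{a} - \mu| > (1-r)\epsilon] \le \frac{\Var[\hat{a}]}{(1-r)^2\epsilon^2} \le \frac{s}{(1-r)^2}.
\end{equation*}
The hypothesis $s < \tfrac{1}{2}(1-r)^2$ forces the right-hand side to be strictly less than $1/2$, giving $P[|\hat{a} - \mu| \le (1-r)\epsilon] > 1/2$ and therefore $P[|\hat{a} - a| \le \epsilon] > 1/2$.

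There is no real obstacle here; the only delicate point is the split of the budget $\epsilon = r\epsilon + (1-r)\epsilon$ between bias and statistical fluctuation, which is exactly what the constant $r$ parametrises, and the choice $s < \tfrac{1}{2}(1-r)^2$ is dictated precisely so that Chebyshev closes the argument with a strict inequality. The proof is short enough that I would simply present these two displays in sequence.
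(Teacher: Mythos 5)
Your proof is correct and follows essentially the same route as the paper's: split the error budget as $\epsilon = r\epsilon + (1-r)\epsilon$ via the triangle inequality, then apply Chebyshev's inequality to the deviation of $\hat{a}$ from its mean, with the hypothesis $s < \tfrac{1}{2}(1-r)^2$ closing the strict bound. No gaps.
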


\begin{proof} According to the triangle inequality $|\hat{a}-a|\leq |\hat{a}-\EE[\hat{a}]|+|\EE[\hat{a}]-a|$, thus if  $|\hat{a}-\EE[\hat{a}]|+|\EE[\hat{a}]-a|\leq \epsilon$ then $|\hat{a}-a|\leq \epsilon$ i.e. the former event is a subset of the latter. It follows that 
\begin{align}
    P[|\hat{a}-a|\leq \epsilon] \geq  P[|\hat{a}-\EE[\hat{a}]|+|\EE[\hat{a}]-a|\leq \epsilon].
\end{align}
Using similar logic, since $|E[\hat{a}]-a|\leq r\epsilon$
\begin{align}
    P[|\hat{a}-\EE[\hat{a}]|+|\EE[\hat{a}]-a|\leq \epsilon]\geq  P[|\hat{a}-\EE[\hat{a}]|\leq (1-r)\epsilon].
\end{align}
Applying Chebyshev's inequality
\begin{align}
    P[|\hat{a}-\EE[\hat{a}]|\leq (1-r)\epsilon]\geq 1-\frac{\Var[\hat{a}]}{(1-r)^2\epsilon^2} \geq 1-\frac{s}{(1-r)^2}.
\end{align}
The last quantity is strictly greater than $1/2$ as per the assumption on $s$. 
\end{proof}

This lemma basically says that if the bias and the variance are respectively of order $\epsilon_\circ$ and $\epsilon_\circ^2$ (with a certain constraint between their prefactors), then a confidence interval of length $2\epsilon_\circ$ around the true value is guaranteed with probability greater than one half. We summarise the obtained parameters in the following table before proceeding to the proof.
\begin{table}[ht]
\centering
  \begin{tabular}{ | c | c | c  |}
    \hline
    \multicolumn{3}{|c|}{Choose constants $r,s$ with $r\in (0,1)$ and $s<(1-r)^2/2$}\\
    \hline
    Parameter & Value & Intuitive meaning of the value \\ \hline
    Bias $B$ & $r\epsilon_\circ$ & a fraction of the desired accuracy \\ \hline
    Variance $V$ & $s\epsilon_\circ^{2-2\beta}$ & square of inverse hardware depth \\ \hline
    Parallel runs $T$ & $\lceil\epsilon_\circ^{-2\beta}\rceil$ & hardware-limited variance divided by desired variance \\
    \hline
  \end{tabular}
  \caption{Parameter setting for transforming type-I UQAE to a low-depth QAE.}
  \label{protocol-1}
\end{table}
\begin{theorem}
    Given $\beta\in [0,1]$, $r \in (0, 1)$ and $s < (1 - r)^2/2$, set $B = r\epsilon_\circ$, $V = s\epsilon_{\circ}^{2-2\beta}$ and $T = \lceil \epsilon_\circ^{-2\beta} \rceil$.
    Then, running a black box $\text{UQAE-I}(B,V)$ as defined in \cref{prototype-I} independently $T$ times and averaging the outputs $\tilde{a}_i$ gives an estimator $\bar{a}:=\frac{1}{T} \sum_{i=1}^T \tilde{a}_i$ satisfying
    \begin{align}
        P[|\bar{a}-a|\leq \epsilon_\circ]\geq 1-\delta_\circ
    \end{align}
    with circuit depth $D = \tilde{O}(\epsilon_\circ^{-1+\beta})$ and query complexity $N = \tilde{O}(\epsilon_\circ^{-1-\beta}\log(\delta_\circ^{-1}))$.
\end{theorem}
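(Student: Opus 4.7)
The plan is to verify directly from the definitions that the three ingredients depth, bias of $\bar a$, and variance of $\bar a$ match exactly what \cref{lemma:bias_and_variance} requires, and then to boost the resulting constant success probability to $1-\delta_\circ$ by a standard median-of-means amplification. Since the $T$ runs are executed in parallel and identically, the aggregated circuit depth equals the depth of a single call to the black box, which by \cref{prototype-I} is $\tilde O(V^{-1/2})$. Substituting the chosen $V = s\epsilon_\circ^{2-2\beta}$ immediately yields $D = \tilde O(\epsilon_\circ^{-1+\beta})$ with the constant $s$ absorbed into the soft-$O$, as required.

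For the bias and variance of the sample mean, linearity of expectation gives $|\EE[\bar a]-a| = |\EE[\tilde a_1]-a| \leq B = r\epsilon_\circ$, and independence of the $T$ runs gives
\begin{align}
\Var[\bar a] \;=\; \frac{1}{T^2}\sum_{i=1}^{T}\Var[\tilde a_i] \;\leq\; \frac{V}{T} \;\leq\; \frac{s\epsilon_\circ^{2-2\beta}}{\epsilon_\circ^{-2\beta}} \;=\; s\epsilon_\circ^2.
\end{align}
These two inequalities are precisely the hypotheses of \cref{lemma:bias_and_variance} with $\epsilon = \epsilon_\circ$, because the constraints $r\in(0,1)$ and $s<(1-r)^2/2$ are built into the theorem statement. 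The lemma therefore gives $P[|\bar a - a|\leq \epsilon_\circ] > 1/2$.

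To upgrade this constant confidence into $1-\delta_\circ$ I would apply the standard median trick in a second outer layer of parallelization: run the above averaging procedure $M = O(\log\delta_\circ^{-1})$ times independently, obtain $\bar a^{(1)},\dots,\bar a^{(M)}$, and output their median $\hat a$. A Chernoff bound on the Bernoulli indicators $\mathbf{1}\{|\bar a^{(j)}-a|\leq \epsilon_\circ\}$, each of success parameter strictly above $1/2$, shows that a strict majority of the $\bar a^{(j)}$ fall inside $[a-\epsilon_\circ,a+\epsilon_\circ]$ except with probability at most $\delta_\circ$, and whenever this happens the median is forced into the same interval. This outer layer leaves the maximum circuit depth untouched and multiplies the number of black-box calls by $M$, so the total query complexity is
\begin{align}
N \;=\; T\cdot M\cdot \tilde O\!\left(V^{-1/2}\right) \;=\; \tilde O\!\left(\epsilon_\circ^{-2\beta}\cdot \log\delta_\circ^{-1}\cdot \epsilon_\circ^{-1+\beta}\right) \;=\; \tilde O\!\left(\epsilon_\circ^{-1-\beta}\log\delta_\circ^{-1}\right),
\end{align}
exactly the scaling claimed.

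The bookkeeping is routine, and I do not anticipate a serious obstacle. The one subtle point is simply to notice that \cref{lemma:bias_and_variance} by itself delivers only probability strictly above $1/2$, so the $\log\delta_\circ^{-1}$ factor in the query complexity must come entirely from the outer median step rather than from the black box, whose parameters $B$ and $V$ as chosen by the theorem are oblivious to $\delta_\circ$. This separation of concerns, precision controlled by $(B,V,T)$ on the inside and confidence controlled by $M$ on the outside, is really the whole content of the argument.
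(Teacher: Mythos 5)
Your proposal is correct and follows essentially the same route as the paper: bound the bias and variance of the sample mean, invoke \cref{lemma:bias_and_variance} to get success probability above $1/2$, and apply the median trick to reach $1-\delta_\circ$, with the depth and query counts read off from \cref{prototype-I}. The only cosmetic difference is that you spell out the Chernoff argument behind the median trick, which the paper leaves implicit.
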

\begin{proof}
    The bias of the mean is no greater than the mean of the individual biases. Indeed
    \begin{align}
        |\EE[\bar{a}]-a| = \frac{1}{T} \left|\sum_{i=1}^T (\EE[\tilde{a}_i] -a ) \right| \leq \frac{1}{T}\sum_{i=1}^T |\EE[\tilde{a}_i] -a|.
    \end{align}
    It follows that $|\EE[\bar{a}]-a|\leq r\epsilon_\circ = B$. Moreover
    \begin{align}
        \Var[\bar{a}]= \frac{1}{T^2}\sum_{i=1}^T \Var[\tilde{a}_i]\leq \frac{1}{T}s\epsilon_\circ^{2-2\beta}\leq s\epsilon_\circ^2.
    \end{align}
    \cref{lemma:bias_and_variance} thus guarantees that $P[|\bar{a}-a|\leq \epsilon_\circ]>\frac{1}{2}$. Regarding resources, the maximum circuit depth of the aggregated algorithm is equal to that of each individual run, which is equal to 
    \begin{align}
        D = D_{\text{UQAE-I}(B,V)} =  \Tilde{O}(V^{-1/2}) = \Tilde{O}(\epsilon_\circ^{-1+\beta}).
    \end{align}
    The total query complexity is equal to $T$ times the query complexity of each run
    \begin{align}
        N = TN_{\text{UQAE-I}(B,V)} = T \Tilde{O}(V^{-1/2}) = \Tilde{O}(\epsilon_\circ^{-1-\beta}).
    \end{align}
    In both expressions, $\Tilde{O}$ potentially suppresses polylogarithmic terms in $\epsilon_\circ^{-1}$. Using the median trick to boost the success probability to $1-\delta_\circ$ results in a total query complexity of $\Tilde{O}(\epsilon_\circ^{-1-\beta}\log(\delta_\circ^{-1}))$.
\end{proof}
\subsection{Type II: Bias and failure probability}
\begin{problem} 
\label{problem-2}
For any $\beta\in [0,1]$, find $B,\epsilon,\delta,T$ such that running a black box $\text{UQAE-II}(B,\epsilon,\delta)$ as defined in \cref{prototype-II} independently $T$ times and averaging the outputs $\tilde{a}_i$ gives an estimator $\bar{a}:=\frac{1}{T} \sum_{i=1}^T \tilde{a}_i$ such that
\begin{align}
    P[|\bar{a}-a|\leq \epsilon_\circ]\geq 1-\delta_\circ,
\end{align}
with resources $D = \tilde{O}(\epsilon_\circ ^{-1+\beta})$ and $N = \tilde{O}(\epsilon_\circ^{-1-\beta}\log(\delta_\circ^{-1}))$.
\end{problem}

\textbf{Analysis}: This problem is a bit more intricate compared to the previous one so we will start by making some assumptions and then remove them one by one.

\underline{\textit{Simplest model: zero bias and zero failure probability}}. Assume that the maximum circuit depth and the total query complexity have no dependence on $B$ and $\delta$ so we can set $B =\delta=0$. In that case we have an algorithm $\text{UQAE-II}(\epsilon)$ that outputs an estimate $\tilde{a}$ of $a$ such that $P[|\tilde{a}-a|\leq \epsilon]=1$ and $\EE[\tilde{a}]=a$ using a maximum depth $D=\Tilde{O}(\epsilon^{-1})$. Equating the maximum depth gives $\epsilon = \epsilon_\circ^{1-\beta}$ thus for all $i$ we have $\tilde{a}_i\in [a-\epsilon_\circ^{1-\beta},a+\epsilon_\circ^{1-\beta}]$  and $\EE[\tilde{a}_i]=a$. According to Hoeffding's inequality
\begin{align}
    P[|\bar{a}-a|\geq \epsilon_\circ]\leq 2\exp(-\frac{T\epsilon_\circ^2}{2\epsilon_\circ^{2-2\beta}})= 2\exp(-\frac{1}{2}T\epsilon_\circ^{2\beta}).
\end{align}
Requiring this probability to be smaller than or equal to $\delta_\circ$ amounts to setting 
\begin{align}
     T = 2\log(2\delta_\circ^{-1})\epsilon_\circ^{-2\beta}.
\end{align}
The number of parallel runs thus has the same scaling as in the case of type-I UQAE.

\underline{\textit{Intermediate model: zero bias but non-zero failure probability}}. Let us now consider an algorithm $\text{UQAE-II}(\epsilon,\delta)$ that can output an estimate $\tilde{a}$ of $a$ such that $P[|\tilde{a}-a|\leq \epsilon]\geq 1-\delta$ and $\EE[\tilde{a}]=a$ using a maximum depth $D= \Tilde{O}(\epsilon^{-1})$. Let us decompose $\tilde{a}$ into two components, depending on whether they belong to the confidence interval or not
\begin{align}
    \tilde{a} = \tilde{a} \mathbf{1}(\tilde{a}\in [a-\epsilon,a+\epsilon])+ \tilde{a} \mathbf{1}(\tilde{a}\notin [a-\epsilon,a+\epsilon]) := \tilde{a}^{(\text{good})} +\tilde{a}^{(\text{bad})}.
\end{align}
Our attention is on the good part, since if it is unbiased then we are back to the simplest model. However the presence of the bad part shifts the bias by a certain amount. Let us try to bound this shift. Let $C$ be a constant such that $\tilde{a}\in [-C,C]$. In principle $C$ is 1 but some algorithms might output a value larger than 1.
\begin{align}
    |\EE[\tilde{a}^{(\text{bad})}]|\leq \max(|\tilde{a}|) \EE[\mathbf{1}(\tilde{a}\notin [a-\epsilon,a+\epsilon])]\leq C\delta.
\end{align}
Thus $a+C\delta\geq \EE[\tilde{a}^{(\text{good})}]\geq a-C\delta$ i.e. the bias of the good part is bounded by $C\delta$. 

Let us now set $\epsilon = \epsilon_\circ^{1-\beta}$ while $\delta$ and $T$ are to be determined. Running $\text{UQAE-II}(\epsilon,\delta)$ independently $T$ times and obtain $\tilde{a}_1,\tilde{a}_2,...,\tilde{a}_T$. Denote by $G$ the event that $\tilde{a}_i = \tilde{a}_i^{(\text{good})}$ for all $i$. By the union bound $P(\bar{G})\leq T\delta$. We can thus bound the final failure probability as follows
\begin{align}
    P[|\bar{a}-a|\geq \epsilon_\circ] &= P[(|\bar{a}-a|\geq \epsilon_\circ) \cap G]+P[(|\bar{a}-a|\geq \epsilon_\circ) \cap \overline{G}]\\
    & \leq P[|\bar{a}^{(\text{good})}-a|\geq \epsilon_\circ]+T\delta,
\end{align}
where we have denoted $\bar{a}^{(\text{good})}:= (\tilde{a}_1^{(\text{good})}+...+\tilde{a}_T^{(\text{good})})/T$. If we want the final failure probability to be upper bounded by $\delta_\circ$, we can choose $T$ and $\delta$ such that $T\delta\leq \delta_\circ/2$ and then try to make $P[|\bar{a}^{(\text{good})}-a|\geq \epsilon_\circ]$ less than or equal to $\delta_\circ/2$. By the triangle inequality $|\bar{a}^{(\text{good})} - a|\leq |\bar{a}^{(\text{good})}-E[\bar{a}^{(\text{good})}]|+|E[\bar{a}^{\text{good}}]-a|$. The second term is the bias of the mean of good estimators, which cannot exceed the mean of the individual biases, which themselves are upper bounded by $C\delta$ as shown above. Thus
\begin{align}
    P[|\bar{a}^{(\text{good})}-a| \geq \epsilon_\circ] &\leq P[|\bar{a}^{(\text{good})}-\EE[\bar{a}^{(\text{good})}]|+|\EE[\bar{a}^{(\text{good})}]-a|\geq \epsilon_\circ]\\
    &\leq P[|\bar{a}^{(\text{good})}-\EE[\bar{a}^{(\text{good})}]|\geq \epsilon_\circ -C\delta]\leq 2 \exp(-\frac{T(\epsilon_\circ-C\delta)^2}{2\epsilon_\circ^{2-2\beta}}),
\end{align}
where we have applied Hoeffding's inequality in the last step. If we take $\delta \leq  s\epsilon_\circ/C$ for some constant $s<1$ and $T = \frac{2\log(4\delta_\circ^{-1})\epsilon_\circ^{-2\beta}}{(1-s)^2}$ then it's guaranteed that $P[|\bar{a}^{(\text{good})}-a| \geq \epsilon_\circ]\leq \delta_\circ/2$ as we need. Combine the two conditions on $\delta$ leads to the choice $\delta = \min(\frac{\delta_\circ}{2T}, \frac{s\epsilon_\circ}{C})$.

\underline{\textit{Realistic model: non-zero bias and non-zero failure probability}}. Reintroduce the bias simply shifts the mean of the good part by a corresponding amount, as long as this shift is of order $\epsilon_\circ$ the analysis is not much different from the previous case, see the proof for more details.

\begin{table}[ht]
\centering
  \begin{tabular}{ | c | c | c  |}
    \hline
    \multicolumn{3}{|c|}{Choose constants $r,s\in(0,1)$ such that $r+s<1$}\\
    \hline
    Parameter & Value & Intuitive meaning of the value \\ \hline
    Bias $B$ & $r\epsilon_\circ$ & a fraction of the desired precision \\ \hline
    Precision $\epsilon$ & $\epsilon_\circ^{1-\beta}$ & hardware limited precision \\ \hline
    Failure probability $\delta$ & $\min(\frac{\delta_\circ}{2T},\frac{s\epsilon_\circ}{C})$ & \begin{tabular}{@{}c@{}}some power of the desired precision that keeps \\ both the bias and the failure probability small\end{tabular} \\
    \hline
    Parallel runs $T$ & $\frac{2\log(4\delta_\circ^{-1})\epsilon_\circ^{-2\beta}}{(1-r-s)^2}$ & \begin{tabular}{@{}c@{}}square of the limited precision divided \\ by square of the desired precision\end{tabular} \\
    \hline
  \end{tabular}
  \caption{Parameter setting for transforming type-II UQAE to a low-depth QAE. $C$ is the maximum absolute value of the black box estimator.}
  \label{protocol-2}
\end{table}
\begin{theorem}
    Given $\beta\in [0,1]$ and $r, s \in (0, 1)$ such that $r + s < 1$, set $B = r\epsilon_\circ$, $\epsilon = \epsilon_{\circ}^{1 - \beta}$, $\delta = \min(\frac{\delta_\circ}{2T}, \frac{s\epsilon_\circ}{C})$ and $T = \frac{2\log(4\delta_\circ^{-1})\epsilon_\circ^{-2\beta}}{(1-r-s)^2}$.
    Then, running a black box $\text{UQAE-II}(B,\epsilon,\delta)$ as defined in \cref{prototype-II} independently $T$ times and averaging the outputs $\tilde{a}_i$ gives an estimator $\bar{a}:=\frac{1}{T} \sum_{i=1}^T \tilde{a}_i$ such that
    \begin{align}
        P[|\bar{a}-a|\leq \epsilon_\circ]\geq 1-\delta_\circ,
    \end{align}
    with resources $D = \tilde{O}(\epsilon_\circ ^{-1+\beta})$ and $N = \tilde{O}(\epsilon_\circ^{-1-\beta}\log(\delta_\circ^{-1}))$.
\end{theorem}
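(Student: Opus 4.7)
The plan is to formalize the three-stage analysis sketched in the paragraphs immediately preceding the theorem. The key move is to decompose each independent sample $\tilde{a}_i$ into a good and a bad part, $\tilde{a}_i^{(\text{good})} := \tilde{a}_i\,\mathbf{1}(\tilde{a}_i\in[a-\epsilon,a+\epsilon])$ and $\tilde{a}_i^{(\text{bad})} := \tilde{a}_i - \tilde{a}_i^{(\text{good})}$, and to let $G$ be the event that every run is good. A union bound gives $P(\overline{G})\leq T\delta$ and on $G$ one has $\bar{a}=\bar{a}^{(\text{good})}$, so
\begin{align*}
P\bigl[|\bar{a}-a|\geq\epsilon_\circ\bigr]\;\leq\;P\bigl[|\bar{a}^{(\text{good})}-a|\geq\epsilon_\circ\bigr]+T\delta,
\end{align*}
and the choice $\delta\leq\delta_\circ/(2T)$ absorbs the second term into $\delta_\circ/2$.

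Next I would control the bias of the good part. Since $|\tilde{a}_i|\leq C$ and $P[\tilde{a}_i^{(\text{bad})}\neq 0]\leq\delta$, one gets $|\EE[\tilde{a}_i^{(\text{bad})}]|\leq C\delta$, and combining this with the black-box bias bound $B=r\epsilon_\circ$ yields $|\EE[\tilde{a}_i^{(\text{good})}]-a|\leq(r+s)\epsilon_\circ$, using $\delta\leq s\epsilon_\circ/C$. The triangle inequality then reduces $P[|\bar{a}^{(\text{good})}-a|\geq\epsilon_\circ]$ to $P\bigl[|\bar{a}^{(\text{good})}-\EE[\bar{a}^{(\text{good})}]|\geq(1-r-s)\epsilon_\circ\bigr]$, which has a positive threshold precisely because $r+s<1$. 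Hoeffding's inequality with per-sample range $2\epsilon=2\epsilon_\circ^{1-\beta}$ bounds this by $2\exp\!\bigl(-T(1-r-s)^2\epsilon_\circ^{2\beta}/2\bigr)$, and the stated $T$ makes it at most $\delta_\circ/2$. Summing the two contributions closes the failure-probability bound.

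The resource bounds are then immediate: the maximum depth inherits from a single black-box run, so $D = D_{\text{UQAE-II}}(B,\epsilon,\delta) = \tilde O(\epsilon^{-1}) = \tilde O(\epsilon_\circ^{-1+\beta})$, while the total query complexity is $T$ times the per-run cost, giving $N = \tilde O(\epsilon_\circ^{-1-\beta}\log\delta_\circ^{-1})$ after noting that $\log\delta^{-1}$ is polylogarithmic in $\epsilon_\circ^{-1}$ and $\delta_\circ^{-1}$ and is absorbed into the tilde.

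The step that will require the most care is the Hoeffding application: strictly speaking $\tilde{a}_i^{(\text{good})}$ is supported on $\{0\}\cup[a-\epsilon,a+\epsilon]$, a set whose diameter can exceed $2\epsilon$ when $a>\epsilon$, which would destroy the $\epsilon_\circ^{2\beta}$ scaling in the exponent. The cleanest way to handle this is to run the analysis with the clipped variable $\hat{a}_i := \min(a+\epsilon,\max(a-\epsilon,\tilde{a}_i))$, which coincides with $\tilde{a}_i$ on $G$ and always lies in an interval of length $2\epsilon$; no change to the algorithm itself is needed, only to the proof. Everything else is bookkeeping with union and triangle inequalities and with the three constants $r$, $s$, $1-r-s$ that split the total error budget among bias, Hoeffding tail, and union-bound tail.
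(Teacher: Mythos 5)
Your proposal follows essentially the same route as the paper's own proof: the good/bad decomposition, the union bound over the event $G$ with $T\delta\leq\delta_\circ/2$, the three-term triangle inequality splitting the error into Hoeffding deviation, good-part shift ($\leq C\delta\leq s\epsilon_\circ$), and bias ($\leq r\epsilon_\circ$), and then Hoeffding with threshold $(1-r-s)\epsilon_\circ$. Your closing remark about the support of $\tilde{a}_i^{(\text{good})}$ being $\{0\}\cup[a-\epsilon,a+\epsilon]$ identifies a real technical gap that the paper's own proof glosses over, and your clipping fix (at the cost of a constant-factor adjustment in the shift bound) is the right repair.
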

\begin{proof}
    Let $G$ be the event that $\tilde{a}_i = \tilde{a}_i^{(\text{good})} $ for all $i=1,...,T$. As a result of the union bound $P[\overline{G}]\leq T\delta \leq  \delta_\circ/2$, thus
    \begin{align}
        P[|\bar{a}-a|\geq \epsilon_\circ] &= P[(|\bar{a}-a|\geq \epsilon_\circ) \cap E]+P[(|\bar{a}-a|\geq \epsilon_\circ) \cap \overline{E}] \\
        & \leq P[|\bar{a}^{(\text{good})}-a|\geq \epsilon_\circ]+ \delta_\circ/2.
    \end{align}
    By the triangle inequality
    \begin{align}
        |\bar{a}^{(\text{good})}-a| \leq \underbrace{|\bar{a}^{(\text{good})}- \EE[\bar{a}^{(\text{good})}]|}_{\text{deviation from the good expectation}}+\underbrace{|\EE[\bar{a}^{(\text{good})}]-\EE[\tilde{a}]|}_{\text{shift of the good expectation} }+\underbrace{|\EE[\tilde{a}]-a|}_{\text{bias}}.
    \end{align}
The first term is the deviation of the sample mean of the good parts from its expectation, which will be bounded using Hoeffding's inequality. The second term is the shift of the expectation of the good part from that of the original estimator, which has been shown to be upper bounded by $C\delta$, which itself cannot exceed $s\epsilon_\circ$ by construction. The last term is simply the bias, which is chosen to be no greater than $r\epsilon_\circ$. It follows that
    \begin{align}
        P[|\bar{a}^{(\text{good})}-a|\geq \epsilon_\circ] &\leq P[|\bar{a}^{(\text{good})}- \EE[\bar{a}^{(\text{good})}]|\geq \epsilon_\circ(1-r-s)]\\
        & \leq 2\exp(-\frac{T\epsilon_\circ^2(1-r-s)^2}{2\epsilon^{2-2\beta}})\leq \delta_\circ/2.
    \end{align}
    Putting everything together we have
    \begin{align}
        P[|\bar{a}-a|\geq \epsilon_\circ]\leq \delta_\circ.
    \end{align}
    The resource analysis is similar to that of type-I UQAE.
\end{proof}

\section{Low-depth phase estimation}
\label{sec:low_depth_phase_estimation}

Our method can be applied to phase estimation with slight modifications. We distinguish two types of phase estimation algorithms, not in terms of statistics such as variance or failure probability but in terms of arithmetic: the usual arithmetic of $\mathbb{R}$ or the circular arithmetic of $\mathbb{R}/2\pi$. In the former case, all previous constructions remain valid, see \cref{thm:unbiased_phase_est} and proposition \ref{protocol-2} below for an example. The latter case however comes with some subtleties.  First, note that if we want to formulate an unbiased phase estimation algorithm using circular arithmetic then we must define the difference between two angles since the bias is the expectation of the difference between the estimated angle and the true angle. Given two angles $\theta$ and $\phi$, we define $\theta\ominus\phi$  to be the unique number $r\in [-\pi,\pi)$ such that $\theta-\phi=r \mod{2\pi}$.  It is the algebraic value of the angle between $\theta$ and $\phi$. Its absolute value is given by the length of the smaller arc and its sign depends on the relative position between $\theta$ and $\phi$: if we go along the this arc from $\theta$ to $\phi$ in a clockwise direction ($\theta$ is on the left of $\phi$) then the sign is positive, otherwise it is negative. Consider then the following prototype of an unbiased phase estimation algorithm formulated in terms of the circular difference.
\begin{defn}[Unbiased QPE type II] Given any $B,\epsilon,\delta>0$, an unbiased QPE of type II (UQPE-II) is an algorithm that can output an estimator $\tilde{\theta}$ of $\theta$ such that
\begin{align}
    |\EE[\tilde{\theta}\ominus\theta]|\leq B \quad \text{and}\quad P[|\tilde{\theta}\ominus\theta|\leq \epsilon]\geq 1-\delta
\end{align}
with resources
\begin{align}
    D_\text{UQAE-II}(B,\epsilon,\delta) = \tilde{O}(\epsilon^{-1})\quad \text{and}\quad N_\text{UQAE-II}(B,\epsilon,\delta) = \tilde{O}(\epsilon^{-1}\log(\delta^{-1}))
\end{align}
where $\tilde{O}$ suppresses polylogarithmic terms in $\epsilon^{-1}$ and $B^{-1}$.
\end{defn}
We cannot immediately replace the circular difference by the standard difference because the former is always less than or equal to the latter in absolute value. The example of $\theta=0$ and $\tilde{\theta}=2\pi-\epsilon$ is a clear illustration of this point.  If we insist on working with circular arithmetic then we run into another trouble: a continuous notion of the mean cannot be defined for angles.

\begin{figure}[ht]
     \centering
     \begin{subfigure}[t]{0.45\linewidth}
         \centering
         \includegraphics[width=0.9\linewidth]{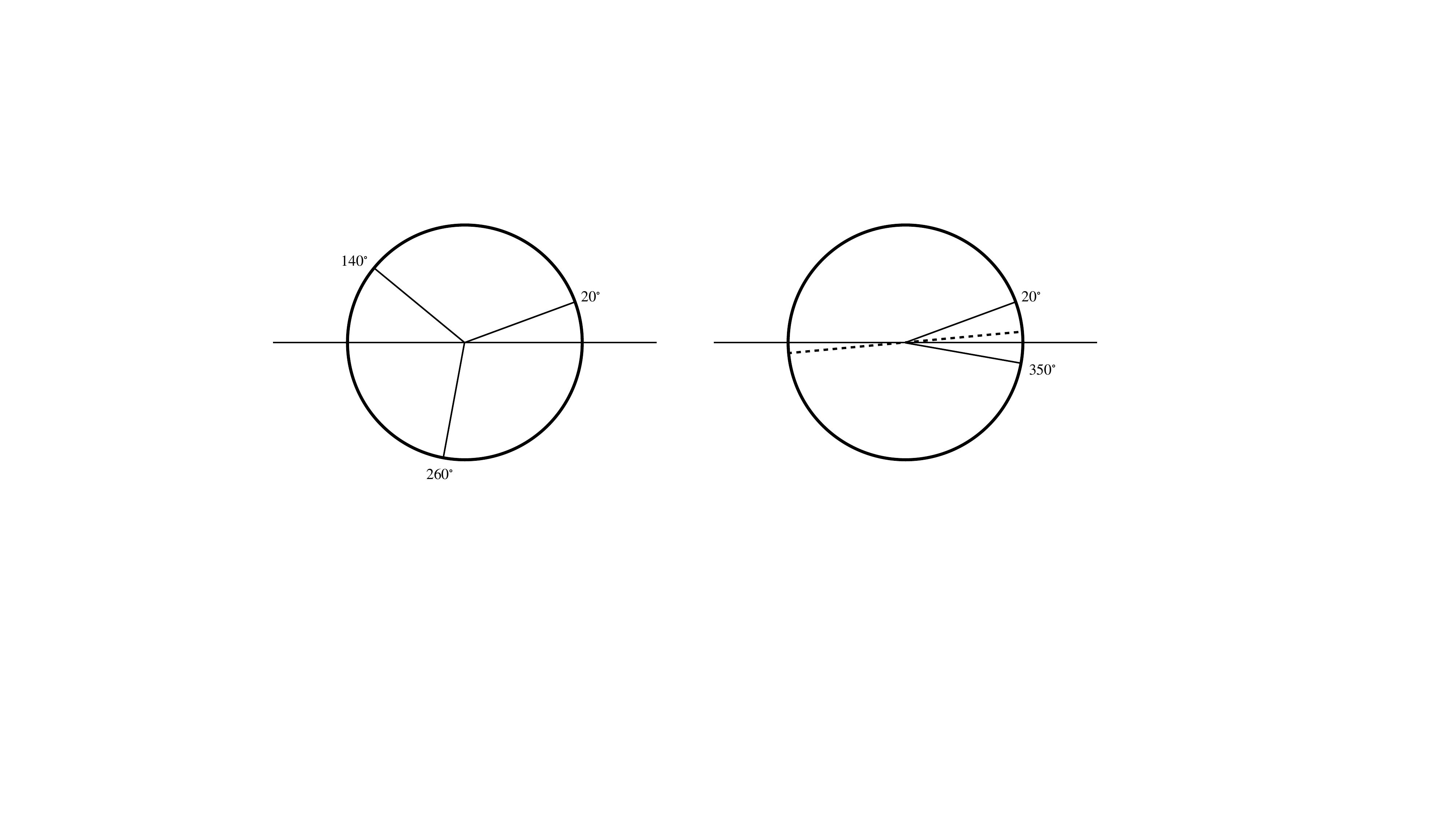}
         \caption{ Due to symmetry, no sensible notion of the mean can be defined for these angles.}
         \label{circular_1}
     \end{subfigure}
     \hfill 
     \begin{subfigure}[t]{0.45\linewidth}
         \centering
         \raisebox{1.1mm}{\includegraphics[width=0.9\linewidth]{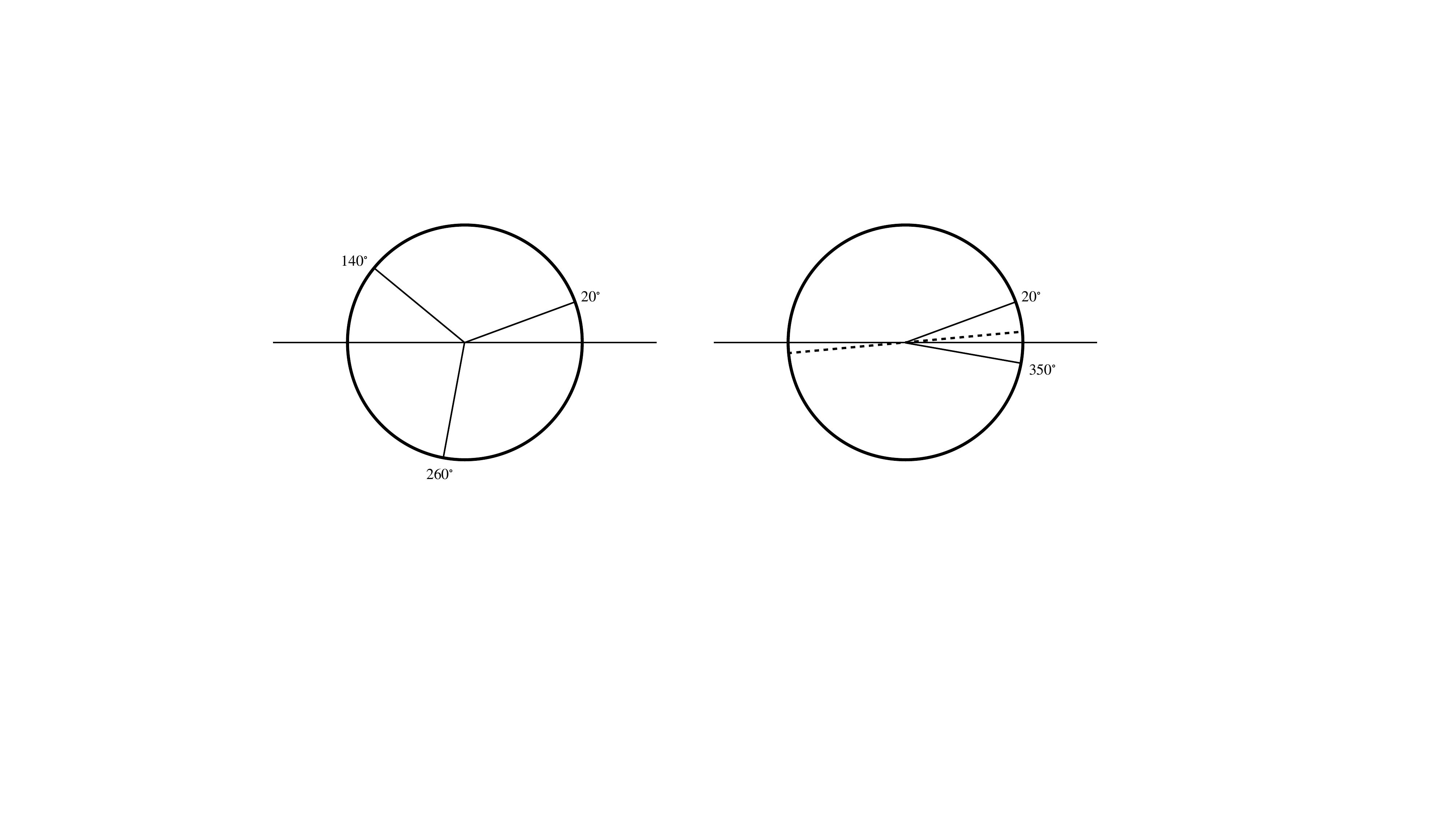}}
         \caption{The usual arithmetic mean gives $185^\circ$, which is not what we want in this case.}
         \label{circular_2}
     \end{subfigure}
     \caption{Some issues with defining a circular mean.}
     \label{circular-mean}
\end{figure}

\cref{circular-mean} shows two problems with defining a circular mean. First, when the angles spread out over the circle, it might not make sense to talk about the mean anymore. Second, even when the angles are concentrated inside an arc, the usual arithmetic mean might give us a value outside of that arc. We circumvent these issues by first obtaining a low-precision but high-confident estimate of $\theta$, building an arc that contains both $\theta$ and subsequent estimates with a high probability. We then construct an isometric and bijective mapping between angles inside this arc and the interval $[0,1]$. With this mapping, we can replace the circular difference by the usual difference and apply the previous technique to obtain a high-precision estimate. Finally, we apply the inverse mapping to convert the obtained value back to an angle. The failure probability of the entire process is upper bounded by $\delta_\circ$. We refer to the appendix for the detailed algorithm.

Can we formulate an UQPE in terms of bias and variance? The difficulty here lies in defining the circular variance itself and we leave this for future work.
\section{Examples}
\label{sec:examples}

In this section, we present several UQAE algorithms and show how to obtain a low-depth QAE from them. Previously we assume that parameters like bias and variance or bias and failure probability can be set independently. This is however not the case for most algorithms in the literature and some light manipulations are usually  needed in order to achieve the correct parameter setting.
\subsection{Type I UQAE}
\begin{theorem}[Unbiased amplitude estimation~\cite{cornelissen_sublinear-time_2023}]
\label{thm:unbiased_amp_est}
Given $K \geq 4$ and $B \in (0, 1)$, there exists an unbiased amplitude estimation algorithm that outputs an estimate $\tilde{a} \in [0,1]$ such that 
\begin{align}
    |\EE[\tilde{a}] - a| \leq B \quad \text{and}\quad \Var[\tilde{a}] \leq \frac{91a}{K^2} + B .
\end{align}
using $\order{K\log\log(K) \log(K/B)}$ total query complexity.
\end{theorem}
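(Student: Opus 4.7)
The plan is to build the estimator $\tilde{a}$ by running phase estimation on the Grover operator $Q$ whose eigenvalues are $e^{\pm 2i\theta}$ with $a = \sin^2\theta$, and then applying a randomization that removes the systematic bias of standard phase estimation. The raw estimator is $\tilde{a} = \sin^2(\tilde{\theta})$ where $\tilde{\theta}$ is the phase-estimation output; everything reduces to controlling the first two moments of this pushforward distribution.

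First I would compute the variance of the raw estimator. With $K$ Grover queries, the output of phase estimation is distributed according to the squared Dirichlet/sinc kernel, which concentrates around $\theta$ with standard deviation of order $1/K$. Pushing this through $\sin^2$ via the chain rule $(\sin^2)'(\theta) = 2\sin\theta\cos\theta = 2\sqrt{a(1-a)}$, the squared derivative contributes a factor $a(1-a) \leq a$ to the variance, yielding the $\mathcal{O}(a/K^2)$ bulk term. Matching the explicit constant $91$ is then a direct second-moment calculation on the sinc-squared distribution, combined with standard bounds on the residual higher-order terms.

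Next I would tackle the bias. Standard phase estimation is not exactly centered at $\theta$ because the discrete output grid breaks the symmetry of the sinc-squared envelope, producing an $\mathcal{O}(1/K)$ systematic bias that is far too large to meet the $B$ requirement. The debiasing strategy is to prepend a random phase shift drawn uniformly from an interval commensurate with the grid spacing; after marginalizing over this shift, the conditional mean of $\tilde{a}$ coincides with $a$ up to tail contributions only. To drive the tail probability below $B$, I would apply a median-of-means amplification at cost $\log(1/B)$, and I would use a doubly-logarithmic iterative refinement (which gives the $\log\log(K)$ factor) to shrink the effective grid resolution to order $1/K$ before the final estimation stage. On the tail event $E^c$ of probability $\leq B$, the fact that $\tilde{a}, a \in [0,1]$ makes $(\tilde{a}-a)^2 \leq 1$, so $E^c$ contributes at most $B$ to both the bias and, via $\Var[\tilde{a}] \leq \EE[(\tilde{a}-a)^2]$, to the variance, producing the stated $+B$ corrections.

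The hardest part will be justifying that the random-phase shift genuinely debiases the bulk of the distribution rather than merely translating its mean: one must verify that averaging the asymmetric sinc-squared envelope over a full period of the sampling grid produces a symmetric averaged envelope, which relies on a density condition between the randomization range and the grid spacing. A secondary difficulty is constant tracking, especially carrying the chain-rule factor and applying Cauchy--Schwarz at the right points to secure the prefactor $91$ rather than something strictly larger, and ensuring that the outer iterative refinement does not inflate the variance by more than a constant per round.
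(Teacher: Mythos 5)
This theorem is not proved in the paper at all: it is imported verbatim from \cite{cornelissen_sublinear-time_2023} and used as a black box, so there is no in-paper argument to compare your proposal against. Judged on its own merits, your sketch does capture the architecture one would expect from the cited construction --- phase estimation on the Grover operator, the factor $a(1-a)$ from the Jacobian of $\sin^2$ giving the $O(a/K^2)$ variance, median amplification accounting for $\log(K/B)$, an iterated refinement accounting for $\log\log K$, and the tail event of probability at most $B$ contributing the additive $+B$ to both the bias and the variance bounds.

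There is, however, one concrete gap in the plan as stated. You debias in the \emph{phase} domain (randomize so that $\EE[\tilde\theta]=\theta$ up to tails) and then output $\tilde a=\sin^2\tilde\theta$. But $\sin^2$ is nonlinear, so even an exactly unbiased $\tilde\theta$ yields
\begin{align}
    \EE[\sin^2\tilde\theta]-\sin^2\theta \approx \tfrac{1}{2}(\sin^2)''(\theta)\,\Var[\tilde\theta] = \cos(2\theta)\,\Var[\tilde\theta] = \Theta(1/K^2),
\end{align}
a systematic curvature bias that the random phase shift does not remove. Since the theorem demands $|\EE[\tilde a]-a|\leq B$ for an \emph{arbitrary} $B\in(0,1)$ (possibly $B\ll 1/K^2$) at only polylogarithmic extra cost, an $\Omega(1/K^2)$ residual bias is fatal; the unbiasing must act on the estimator of $a$ itself, not merely on the estimator of $\theta$. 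Secondarily, the constant $91$ and the claim that the outer refinement rounds inflate the variance by only a constant are asserted rather than derived, and these are exactly the nontrivial bookkeeping steps of the cited proof.
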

According to the established protocol \ref{protocol-1}, we need to find some constants $r\in (0,1)$ and $s\in (0,1/2)$ such that $s<\frac{1}{2}(1-r)^2$ and then set 
\begin{align}
    B \leq  r\epsilon_\circ \quad \text{and}\quad \frac{91a}{K^2} + B \leq s\epsilon_\circ^{2-2\beta}.
\end{align}
Apparently it might not be possible to set $B= r\epsilon_\circ$ especially if $\beta<1/2$ in which case $s\epsilon_\circ^{2-2\beta}$ can be smaller than $r\epsilon_\circ$. The issue here is that the bound for the variance is always greater than that of the bias so we cannot choose them recklessly. Lucky for us this is exactly where the power of being unbiased truly shines. Looking at the total query complexity we see that the dominant factor is indeed the inverse square root of the variance whereas the bias only has a logarithmic impact. What this means is that we cannot compromise on the variance but we can set the bias to any power of $\epsilon_\circ$ that we wish. In particular, we can take $B = \min (r\epsilon_\circ, s' \epsilon_\circ^{2-2\beta})$ with $0<s'<s$ and $K$ such that $\frac{91a}{K^2}\leq (s-s')\epsilon_\circ^{2-2\beta}$. The choice of $s=\frac{100}{225}$, $r=0.05$, $s'=9/225$ and $K=15\epsilon_\circ^{-1+\beta}$ satisfies all the requirements. Indeed $s<\frac{1}{2}(1-r)^2$ and 
\begin{align}
    \frac{91a}{K^2}\leq \frac{91}{K^2} = (s-s')\epsilon_\circ^{2-2\beta}.
\end{align}
To summarize, if we set $B = \min\big(\frac{9}{225}\epsilon_\circ^{2-2\beta},0.05\epsilon_\circ\big)$ and $K = 15 \epsilon_\circ^{-1+\beta}$ then we get the following bounds on bias and variance
\begin{align}
    |\EE[\tilde{a}] - a| \leq 0.05\epsilon_\circ,\quad \Var[\tilde{a}]\leq \frac{4}{9}\epsilon_\circ^{2-2\beta}
\end{align}
which conform to the protocol \ref{protocol-1}.
\begin{prop}[Low-depth amplitude estimation algorithm from \cref{thm:unbiased_amp_est}]
Let $T = \lceil \epsilon_\circ^{-2\beta}\rceil$, run the algorithm in \cref{thm:unbiased_amp_est} independently $T$ times with the following parameters
\begin{align}
     K = 15 \epsilon_\circ^{-1+\beta}, B = \min\big(\frac{9}{225}\epsilon_\circ^{2-2\beta},0.05\epsilon_\circ\big).
\end{align}
The sample mean $\bar{a}$ of the outputs satisfies $P[|\bar{a}-a|\leq \epsilon_\circ]>1/2$. The resources of obtaining $\bar{a}$ are
\begin{align}
    D = O(\epsilon_\circ^{-1+\beta}\log \log ( \epsilon_\circ^{-1})\log(\epsilon_\circ^{-1})),\quad N = O(\epsilon_\circ^{-1+\beta}\log \log ( \epsilon_\circ^{-1})\log(\epsilon_\circ^{-1}))
\end{align}
\end{prop}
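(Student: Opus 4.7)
The plan is to observe that this proposition is a direct instantiation of the Type I protocol (summarized in Table \ref{protocol-1}) applied to the algorithm of \cref{thm:unbiased_amp_est}. So all I really need to do is to exhibit constants $r,s$ with $s<\tfrac12(1-r)^2$ such that, with the stated choice of $K$ and $B$, the algorithm of \cref{thm:unbiased_amp_est} outputs an estimator with bias at most $r\epsilon_\circ$ and variance at most $s\epsilon_\circ^{2-2\beta}$. Once that is done, \cref{lemma:bias_and_variance} applied to the sample mean immediately yields the claimed success probability, and the resource bounds drop out of the query complexity formula of \cref{thm:unbiased_amp_est}.

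First I would verify the bias. Since $B\leq 0.05\,\epsilon_\circ$ by construction, \cref{thm:unbiased_amp_est} gives $|\EE[\tilde a]-a|\leq 0.05\,\epsilon_\circ$, so I would take $r=0.05$. For the variance, I would use $a\leq 1$ and $K=15\,\epsilon_\circ^{-1+\beta}$ to bound $91a/K^2\leq (91/225)\,\epsilon_\circ^{2-2\beta}$, while $B\leq (9/225)\,\epsilon_\circ^{2-2\beta}$ by the other term in the $\min$. Adding, $\Var[\tilde a]\leq (100/225)\,\epsilon_\circ^{2-2\beta} = (4/9)\,\epsilon_\circ^{2-2\beta}$, so the admissible constant is $s=4/9$. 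A numerical check confirms $s=4/9\approx 0.444 < 0.45125 = \tfrac12(1-r)^2$, so the hypothesis of \cref{lemma:bias_and_variance} is met.

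With the UQAE-I parameters certified, the sample mean of $T=\lceil\epsilon_\circ^{-2\beta}\rceil$ independent runs inherits the same bias bound $r\epsilon_\circ$ (bias of a mean is at most the mean of biases) and has variance at most $V/T \leq s\epsilon_\circ^2$, so \cref{lemma:bias_and_variance} yields $P[|\bar a-a|\leq\epsilon_\circ]>1/2$. For the resources, I would plug $K$ and $B$ into the cost formula $\order{K\log\log(K)\log(K/B)}$ of \cref{thm:unbiased_amp_est}: because $B$ is polynomial in $\epsilon_\circ$, $\log(K/B)=O(\log\epsilon_\circ^{-1})$, and each run has depth $O(\epsilon_\circ^{-1+\beta}\log\log(\epsilon_\circ^{-1})\log(\epsilon_\circ^{-1}))$. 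Since runs are independent and parallel, this is also the overall depth $D$, and multiplying by $T$ gives the quoted total query complexity.

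The main subtlety, and the only reason this is not an utterly mechanical substitution, is that the bias $B$ and the variance bound in \cref{thm:unbiased_amp_est} are \emph{coupled}: the variance bound $91a/K^2+B$ contains $B$ as an additive term, so the naive choice $B=r\epsilon_\circ$ would overshoot the variance budget $s\epsilon_\circ^{2-2\beta}$ whenever $\beta<1/2$. The fix, encoded in the $\min$ defining $B$, is to push $B$ down to a fraction of $\epsilon_\circ^{2-2\beta}$ in that regime. This is precisely where the near-unbiased property of \cref{thm:unbiased_amp_est} pays off: since $B$ enters the query complexity only through $\log(K/B)$, shrinking it by a polynomial factor in $\epsilon_\circ$ costs only a $\log(\epsilon_\circ^{-1})$ factor, preserving the low-depth asymptotics.
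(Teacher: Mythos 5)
Your proposal is correct and follows essentially the same route as the paper: it instantiates the Type~I protocol with $r=0.05$, bounds the variance by $(91/225+9/225)\,\epsilon_\circ^{2-2\beta}=\tfrac{4}{9}\epsilon_\circ^{2-2\beta}$, verifies $\tfrac{4}{9}<\tfrac12(1-r)^2$, and correctly identifies the coupling between $B$ and the variance bound (handled by the $\min$ and the merely logarithmic cost of shrinking $B$) as the one non-mechanical step. One small remark: multiplying the per-run cost by $T=\lceil\epsilon_\circ^{-2\beta}\rceil$ gives $N=O(\epsilon_\circ^{-1-\beta}\log\log(\epsilon_\circ^{-1})\log(\epsilon_\circ^{-1}))$, so the exponent $-1+\beta$ in the stated $N$ appears to be a typo in the proposition rather than an error in your reasoning.
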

Note that here we actually used a weaker bound for the variance of the unbiased amplitude estimation algorithm in \cref{thm:unbiased_amp_est}, namely $ \Var[\tilde{a}]\leq \frac{91}{K^2}+B$.
If we want to exploit the full power of the original bound, it comes with a cost of having a rough estimate of $a$ in the parameters $K$ and $B$, is it worth it? The authors of \cite{cornelissen_sublinear-time_2023} also provided an unbiased phase estimation algorithm 
\begin{theorem}[Unbiased phase estimation~\cite{cornelissen_sublinear-time_2023}]
\label{thm:unbiased_phase_est}
Let $U$ be a unitary operator with an eigenvector $|\psi\rangle$ such that $U|\psi\rangle = e^{2\pi i \theta}|\psi\rangle$ where $\theta\in [0,1/2]$. Given $K\geq 1$ and $B\in (0,1)$ there exists an unbiased phase estimation algorithm  which outputs an estimate $\tilde{\theta}\in [-1,1]$ such that
\begin{align}
    |\EE[\tilde{\theta}]-\theta|\leq B \quad \text{and}\quad \Var[\tilde{\theta}]\leq \frac{1}{K^2}+B
\end{align}
using $O(K\log(K/B))$ total query complexity.
\end{theorem}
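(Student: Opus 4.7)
The plan is to build $\tilde\theta$ by combining a standard Kitaev-style phase estimation, which already delivers variance $O(1/K^2)$ but carries a residual rounding bias of order $1/K$, with a classical randomization that averages that bias away. The additive $B$ in both bounds will come from controlling the small-probability tail events where the base routine fails to localize $\theta$.

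First I would set up a base routine $\mathcal{A}_K$ using $m=\lceil\log_2 K\rceil$ ancilla qubits and controlled calls to $U, U^2, \ldots, U^{2^{m-1}}$, followed by an inverse QFT. The standard analysis gives an output $\hat\theta$ supported on the dyadic lattice $\{j/2^m\}$ with $P[|\hat\theta-\theta|\ge x] = O(1/(Kx)^2)$ and hence $\Var[\hat\theta]=O(1/K^2)$, but $\EE[\hat\theta]$ deviates from $\theta$ by as much as $\Theta(1/K)$ because of lattice rounding. To kill this rounding bias I would sample $r$ uniformly from $[0,1/K)$, run $\mathcal{A}_K$ on the shifted unitary $e^{-2\pi i r}U$ (whose eigenvalue on $|\psi\rangle$ is $e^{2\pi i(\theta-r)}$), obtain $\hat\phi$, and return $\tilde\theta_0 = \hat\phi + r$ mapped into $[-1,1]$. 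Averaging over $r$ convolves the lattice-supported estimator with a uniform window of width $1/K$, smearing its support into a continuous distribution; by a change of variables in the joint law of $(r,\hat\phi)$, one gets $\EE[\tilde\theta_0]=\theta$ up to the contribution of the tail events.

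To push the residual bias below $B$, I would amplify confidence by nesting $\mathcal{A}_K$ inside the iterative Kitaev scheme with $L=\Theta(\log(K/B))$ rounds, each contributing one extra bit of reliability about the leading digits. This replaces the $1/(Kx)^2$ tail decay by an exponentially small one, so $|\EE[\tilde\theta]-\theta|\le B$. For the variance, on the good event we inherit the $O(1/K^2)$ bound from the base analysis, while on the bad event (probability $\le B$) the squared deviation is at most $O(1)$ since $\tilde\theta\in[-1,1]$, contributing $O(B)$ and yielding the stated $1/K^2+B$ bound. Summing queries over the $L$ rounds of $O(K)$ each gives the target $O(K\log(K/B))$ complexity.

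The step I expect to be the main obstacle is the debiasing argument itself: verifying $\EE_r\EE_{\hat\phi\mid r}[\hat\phi+r]=\theta$ on the good event requires the conditional kernel of $\hat\phi$ given $r$ to be an exact translate, modulo $1$, of the $r=0$ kernel centered at $\theta-r$. This uses translation invariance of standard phase estimation on the circle, but the restriction $\theta\in[0,1/2]$ and the demand that $\tilde\theta\in[-1,1]$ rather than on the circle mean I will need a short case analysis and a clipping step near the endpoints, each contributing at most $O(B)$ extra to both the bias and the variance and therefore not changing the overall bounds.
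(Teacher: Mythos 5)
This theorem is imported from \cite{cornelissen_sublinear-time_2023}; the paper gives no proof of its own, so your proposal can only be judged against the known construction in that reference. Your overall architecture --- a Kitaev-style base routine, a uniformly random phase offset to average away the lattice-rounding bias, and confidence amplification to control tail events --- is indeed the same skeleton used there and in \cite{vanapeldoorn2022quantumtomographyusingstatepreparation}, and your query count $O(K\log(K/B))$ is bookkept correctly.

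There is, however, a genuine gap at the very first step. For plain $m$-qubit phase estimation the tail is $P[|\hat\theta-\theta|\geq x]=O\big(1/(2^m x)\big)$, not $O\big(1/(Kx)^2\big)$: the probability mass at a lattice point $k$ steps from $\theta$ decays like $1/k^2$, so $\EE[(\hat\theta-\theta)^2]=\sum_k (k2^{-m})^2\,O(1/k^2)=\Theta(2^{-m})=\Theta(1/K)$. The base routine therefore has variance $\Theta(1/K)$, not $O(1/K^2)$, and obtaining the stated $1/K^2+B$ bound is precisely the nontrivial content of the cited result: one must suppress the heavy $1/k^2$ tail (via extra ancillas and/or a median over $\Theta(\log(K/B))$ repetitions) \emph{before} the variance calculation goes through. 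Your proposal does introduce boosting, but only to control the bias, and it applies the exact-debiasing argument to the unboosted kernel. Once you replace the base estimator by a median (or any boosted version), the clean argument that the bias function is periodic with period $2^{-m}$ and integrates to zero over one period --- which relies on the symmetry $\int x F(x)\,dx=0$ of the single-shot kernel $F$ --- no longer applies verbatim, because the median of lattice-valued draws is not a translation-covariant functional of a symmetric kernel in the same way. This is why the actual constructions end up with a small but nonzero bias (compare the $32\pi(m+1)2^{-n}$ bias in the \cite{vanapeldoorn2022quantumtomographyusingstatepreparation} theorem quoted later in the paper) rather than exact unbiasedness plus an additive $B$. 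A secondary, fixable issue: your randomization window $[0,1/K)$ is not an integer multiple of the lattice period $2^{-m}$ when $K$ is not a power of two, so the average of the periodic bias over that window does not vanish; you should randomize over $[0,2^{-m})$ instead.
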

With similar logic we obtain a low depth phase estimation algorithm
\begin{prop}[Low-depth phase estimation algorithm from \cref{thm:unbiased_phase_est}]
Let $T = \lceil \epsilon^{-2\beta}\rceil$, run the algorithm in \cref{thm:unbiased_phase_est} independently $T$ times with the following parameters
\begin{align}
     K = \sqrt{3} \epsilon_\circ^{-1+\beta}, B = \min\big(\frac{1}{9}\epsilon_\circ^{2-2\beta},0.05\epsilon_\circ\big).
\end{align}
The sample mean $\bar{a}$ of the outputs satisfies $P[|\bar{a}-a|\leq \epsilon_\circ]>1/2$. The resources of obtaining $\bar{a}$ are
\begin{align}
    D = O(\epsilon_\circ^{-1+\beta}\log(\epsilon_\circ^{-1})),\quad N = O(\epsilon_\circ^{-1+\beta}\log(\epsilon_\circ^{-1})).
\end{align}
\end{prop}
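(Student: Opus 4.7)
The plan is to invoke Theorem 1 (the UQAE-I-to-low-depth conversion via the sample mean) with the unbiased phase estimation algorithm of \cref{thm:unbiased_phase_est} as the underlying black box. That theorem guarantees $|\EE[\tilde\theta]-\theta|\leq B$ and $\Var[\tilde\theta]\leq 1/K^2+B$, so I need to pick $K$ and $B$ to match the requirements of protocol \ref{protocol-1}, namely bias $\leq r\epsilon_\circ$ and variance $\leq s\epsilon_\circ^{2-2\beta}$ for some admissible pair $(r,s)$ with $s<(1-r)^2/2$.

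The choice $r=0.05$ and $s=4/9$ satisfies the constraint, since $(1-r)^2/2=0.45125>4/9\approx 0.4444$. Setting $K=\sqrt{3}\,\epsilon_\circ^{-1+\beta}$ makes $1/K^2=\tfrac{1}{3}\epsilon_\circ^{2-2\beta}$, and splitting the variance budget as $s=s'+(s-s')$ with $s'=1/9$, I take $B=\min(\tfrac{1}{9}\epsilon_\circ^{2-2\beta},0.05\,\epsilon_\circ)=\min(s'\epsilon_\circ^{2-2\beta},r\epsilon_\circ)$. Under this choice the bias bound $B\leq r\epsilon_\circ$ is immediate, and the variance is $\leq 1/K^2+B\leq \tfrac{1}{3}\epsilon_\circ^{2-2\beta}+\tfrac{1}{9}\epsilon_\circ^{2-2\beta}=\tfrac{4}{9}\epsilon_\circ^{2-2\beta}=s\epsilon_\circ^{2-2\beta}$, as required.

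Feeding these parameters into Theorem 1 with $T=\lceil\epsilon_\circ^{-2\beta}\rceil$ then yields $P[|\bar\theta-\theta|\leq\epsilon_\circ]>1/2$ via \cref{lemma:bias_and_variance}. For the resource count, \cref{thm:unbiased_phase_est} says each run uses $O(K\log(K/B))$ queries; since both $K$ and $1/B$ are polynomial in $1/\epsilon_\circ$, the logarithm is $O(\log\epsilon_\circ^{-1})$, so the per-run cost (and hence the maximum circuit depth $D$) is $O(\epsilon_\circ^{-1+\beta}\log\epsilon_\circ^{-1})$, and the total query complexity is $T$ times this.

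The main subtlety, inherited from the amplitude-estimation counterpart \textit{supra}, is that the variance bound in \cref{thm:unbiased_phase_est} contains the bias $B$ as an additive term, so at first sight $B$ and the variance cannot be controlled independently. What rescues the argument is that $B$ appears only \emph{logarithmically} in the query complexity: we can drive $B$ down to any polynomial in $\epsilon_\circ$ at the price of at most a logarithmic factor, which is exactly what licenses dedicating a thin slice of the variance budget ($s'$) to absorbing the $+B$ term while the remainder ($s-s'$) is reserved for the dominant $1/K^2$ contribution. Apart from this bit of bookkeeping, the proof is a mechanical plug-in into Theorem 1 and \cref{lemma:bias_and_variance} and introduces no new ideas.
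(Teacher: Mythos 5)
Your proof matches the paper's own derivation essentially verbatim: the paper likewise plugs \cref{thm:unbiased_phase_est} into the type-I protocol of \cref{protocol-1} with $r=0.05$, $s=4/9$, $s'=1/9$, checks $s<(1-r)^2/2$, and invokes the merely logarithmic dependence of the query count on $B$ to justify pushing the bias down to $O(\epsilon_\circ^{2-2\beta})$ while reserving the rest of the variance budget for the $1/K^2$ term. The one discrepancy is your (correct) remark that the total query complexity should be $T$ times the per-run cost, i.e.\ $O(\epsilon_\circ^{-1-\beta}\log\epsilon_\circ^{-1})$, whereas the proposition as printed gives $N=O(\epsilon_\circ^{-1+\beta}\log\epsilon_\circ^{-1})$ --- an apparent typo in the paper rather than a flaw in your argument.
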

\subsection{Type-II UQAE}
Before showing an example of type-II unbiased algorithms, let us take a quick look at the following result, which was obtained using the technique of  semi-Pellian polynomials.
\begin{theorem}[Unbiased amplitude estimation \cite{Rall_2023}] For every $\epsilon, \delta, \eta > 0$ there exists an amplitude estimation algorithm satisfying
\begin{align}
    |\EE[\tilde{a}]-a|\leq \epsilon\eta+\delta,\quad\text{and}\quad P[|\tilde{a}-a|\geq \epsilon]\leq \delta
\end{align}
with maximum depth $D\in O(\epsilon^{-1}(\log(\delta^{-1})+\eta^{-1}))$.
\end{theorem}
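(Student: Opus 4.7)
My plan is to apply quantum signal processing to the Grover operator $Q$, whose nontrivial eigenphases are $\pm 2\theta$ with $a=\sin^2\theta$, using a semi-Pellian polynomial whose total degree decomposes additively into three independent pieces: a precision filter of degree $O(1/\epsilon)$, a concentration booster of degree $O(\log(\delta^{-1}))$, and a bias-killing symmetrizer of degree $O(1/\eta)$. Summing these reproduces the claimed depth $D\in O(\epsilon^{-1}(\log(\delta^{-1})+\eta^{-1}))$, and a single QSP measurement on the resulting circuit produces $\tilde a$ in one shot.

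The first two pieces are essentially standard. I would build a degree-$O(1/\epsilon)$ polynomial $P_0$ with $|P_0(\cos\theta)|^2$ behaving as a soft indicator of a grid cell of width $\epsilon$; block-encoding $P_0(Q)$ via QSVT and measuring then returns the correct grid label with constant probability. Composing with an exponentially decaying polynomial of degree $O(\log(\delta^{-1}))$ pushes the tail probability down to $\delta$, giving the concentration statement $P[|\tilde a-a|\geq\epsilon]\leq\delta$. Since the failure event contributes at most $\delta\cdot\max|\tilde a|=O(\delta)$ to $|\EE[\tilde a]-a|$, the only remaining task is to control the bias conditional on the good event. For that I would insert the symmetrizer factor of degree $O(1/\eta)$, chosen so that the Chebyshev expansion of $|P|^2$ agrees with that of a true indicator of width $\epsilon$ up to truncation order $1/\eta$; this matches the first moment of $\tilde a$ on the good event to $a$ within $O(\epsilon\eta)$, yielding $|\EE[\tilde a]-a|\leq \epsilon\eta+\delta$.

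The main obstacle will be combining the three ingredients inside a single semi-Pellian pair. The constraint $|P|^2+|Q|^2=1$ couples the three degree scales, so one cannot simply take a product; I would instead use QSVT composition rules for completed semi-Pellian pairs and verify via an explicit Chebyshev-moment estimate that the symmetrizer factor neither widens the support of the precision filter nor ruins the exponential tail from the concentration booster. Once this joint construction is in hand, the depth count, the concentration bound, and the bias decomposition follow by routine bookkeeping.
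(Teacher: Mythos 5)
First, a point of reference: the paper does not prove this statement at all --- it is imported verbatim from \cite{Rall_2023} (and is in fact criticized immediately after being stated), so the only thing to compare you against is the semi-Pellian machinery of \cite{Rall_2023} that the paper summarizes in \cref{sec: rall-fuller}. Measured against that, your proposal has a fatal bookkeeping problem. Your three ingredients have degrees $O(1/\epsilon)$, $O(\log\delta^{-1})$ and $O(1/\eta)$, and you assert that "summing these reproduces the claimed depth $O(\epsilon^{-1}(\log\delta^{-1}+\eta^{-1}))$." It does not: $O(1/\epsilon+\log\delta^{-1}+1/\eta)$ is far smaller than $O(\epsilon^{-1}\log\delta^{-1}+\epsilon^{-1}\eta^{-1})$, and a single-shot algorithm of that cost achieving $P[|\tilde a-a|\geq\epsilon]\leq\delta$ would undercut the $\Omega(\epsilon^{-1}\log\delta^{-1})$ query requirement for amplitude estimation with failure probability $\delta$, so the discrepancy cannot be waved away as "proving something stronger" --- it signals that the construction is wrong. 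The factors multiply for structural reasons: in the actual argument the confidence interval is shrunk over $O(\log\epsilon^{-1})$ rounds, each round drawing $O(\gamma^{-2}\log\delta^{-1})$ Bernoulli samples from a polynomial of degree $O(\Delta^{-1})$ (this is exactly \cref{Rall-Fuller lemma 1} and \cref{Rall-Fuller lemma 2}), and the de-biasing step needs a polynomial that is $\eta$-close to a linear ramp of slope $\Theta(1/\epsilon)$ on the final interval, which forces degree $O(\epsilon^{-1}\eta^{-1})$, not $O(\eta^{-1})$.

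The second gap is the claim that "a single QSP measurement on the resulting circuit produces $\tilde a$ in one shot." A QSP/QSVT circuit built from a semi-Pellian $P$ yields, upon measurement, one sample of a Bernoulli variable with parameter $|P(a)|^2$ --- one bit. One bit cannot localize $a$ to an $\epsilon$-cell with confidence $1-\delta$, and nothing in your construction provides a register that could carry a "grid label." The algorithm is unavoidably iterative and statistical: many samples per polynomial, a Chernoff test to decide left versus right, repeated over shrinking intervals. Relatedly, your bias mechanism --- matching the Chebyshev expansion of $|P|^2$ to that of "a true indicator of width $\epsilon$" up to truncation order $1/\eta$ --- is not connected by any argument to the first moment of $\tilde a$, and it is unclear where such an indicator would be centered given that $a$ is unknown. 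The mechanism that actually yields $|\EE[\tilde a]-a|\leq\epsilon\eta+\delta$ is that the final Bernoulli parameter is $\eta$-close to an affine function of $a$ on the last confidence interval, so an affine transformation of the sample outcome has conditional expectation within $O(\epsilon\eta)$ of $a$; the off-interval event then contributes the additive $\delta$. That last decomposition (good event plus $\delta\cdot\max|\tilde a|$) is the one piece of your write-up that is correct as stated, but it is also the only piece that does not require the polynomial construction you have not supplied.
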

Even though this algorithm has been claimed by its authors to be unbiased, we argue that the dependence of the maximum depth on the bias is too strong for it to be of any practical value. Indeed, the maximum depth here scales like inverse of the bias, but with such depth, a vanilla QAE can already output an estimator with a precision that is equal to that bias itself. The whole point of having an unbiased estimator is that one can tune the bias to be much smaller than the precision without drastically changing the maximum depth and total query complexity. Here for a desired precision $\epsilon$, if we take $\delta$ to be of order $O(\epsilon)$ and $\eta$ to be $O(1)$, then the bias is of the order $O(\epsilon)$ as well and that doesn't add much value to the estimator. On the other hand if we want the bias to be small, say $O(\epsilon^2)$ then $\delta$ must be taken to be $O(\epsilon^2)$ and $\eta$ to be $O(\epsilon)$, but then the maximum depth is $D= O(\epsilon^{-2})$. With such depth, even a vanilla QAE can output an estimator with precision $O(\epsilon^2)$, which is objectively speaking better than an estimator with precision $O(\epsilon)$ and bias $O(\epsilon^2)$. We take this opportunity to mention that up until this point there is no established standard in the literature as to what should be considered an unbiased amplitude/phase estimation algorithm. In this regard, our prototypes \ref{prototype-I} and \ref{prototype-II} are designed to serve as minimal baselines.
\begin{theorem}[Unbiased phase estimation \cite{vanapeldoorn2022quantumtomographyusingstatepreparation}] Given integers $n,m,M$ such that $n\geq \log_2(\pi m)$ then there exists an algorithm that outputs $\tilde{\theta}\in [0,2\pi]$  such that
\begin{align}
    |\EE[\tilde{\theta}\ominus\theta)]| &\leq 32\pi(m+1)2^{-n}\\
    P[|\tilde{\theta}\ominus\theta|\geq \frac{10}{M}(1+2^{-n})] &\leq 4\pi(m+1)2^{-n}+2e^{-m/4}
\end{align}
Using maximum depth of $M/2$ and total query complexity of $(2m+1)M$.
\end{theorem}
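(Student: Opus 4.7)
The strategy is to aggregate many independent runs of a textbook phase estimation subroutine through a robust estimator and to control both the tail bound and the bias by combining a Chernoff-type concentration argument with the near-symmetry of the sinc-squared output distribution of a single run. The parameters play cleanly separated roles: $M$ fixes the resolution (and the depth) of a single run, $n$ sets the grid spacing $2^{-n}$ responsible for the residual quantization bias, and $m$ drives the exponential concentration of the aggregator.

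The single-run subroutine I would use is standard Kitaev phase estimation with $n$ ancilla qubits and a deepest controlled circuit of length $M/2$; it outputs a grid value $\hat\theta \in 2\pi\{0,2^{-n},\ldots,1-2^{-n}\}$ drawn from the familiar sinc-squared law centered on $\theta$. Two facts about this law power the rest of the argument: a single-sample tail $\Pr[|\hat\theta \ominus \theta| > 10/M\,(1+2^{-n})]$ is bounded by a constant strictly below $1/2$, and the law is symmetric around $\theta$ up to a grid shift of size at most $2\pi\cdot 2^{-n}$. Run this subroutine independently $2m+1$ times to obtain $\hat\theta_1,\ldots,\hat\theta_{2m+1}$, anchor the circular ordering at a cheap preliminary estimate, and let $\tilde\theta$ be the resulting circular median (or an equivalent robust aggregator). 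The tail piece of the statement then follows from a Chernoff bound on the number of samples in the good arc of radius $10/M\,(1+2^{-n})$: because each indicator is Bernoulli with success probability bounded strictly away from $1/2$, the event that a strict majority of the $2m+1$ indicators are good concentrates as $2e^{-m/4}$, while a union bound of size $O(m)$ over a grid of density $2^{-n}$ controls the additional event that the grid lands pathologically near $\theta$, yielding the extra term $4\pi(m+1)2^{-n}$.

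The main technical obstacle will be the bias bound $|\EE[\tilde\theta \ominus \theta]| \leq 32\pi(m+1)2^{-n}$, which has to use symmetry and a careful decomposition rather than a worst-case estimate. Let $G$ be the good event above; on $G$, the aggregator is a deterministic order statistic of $2m+1$ samples drawn from a law that is symmetric around $\theta$ up to a grid shift of $O(2^{-n})$ per sample, and I would show by tracking how a median of iid near-symmetric samples inherits per-sample bias (through a careful expansion of the order-statistic density) that $|\EE[(\tilde\theta \ominus \theta)\,\mathbf{1}_G]| = O((m+1)2^{-n})$. Splitting
\begin{align*}
    \EE[\tilde\theta \ominus \theta] \;=\; \EE[(\tilde\theta \ominus \theta)\,\mathbf{1}_G] \;+\; \EE[(\tilde\theta \ominus \theta)\,\mathbf{1}_{\overline G}],
\end{align*}
the second contribution is at most $2\pi\,\Pr[\overline G] = O((m+1)2^{-n} + e^{-m/4})$; the hypothesis $n \geq \log_2(\pi m)$ keeps the comparison between the exponential tail and the grid term under control so that both pieces can be absorbed into a single $32\pi(m+1)2^{-n}$ bound after tracking universal constants. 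The depth and query counts are inherited directly from the per-run accounting: the longest circuit uses $M/2$ controlled-$U$ queries and a single run uses $M$ in total, so $2m+1$ independent runs give total query complexity $(2m+1)M$ at maximum depth $M/2$.
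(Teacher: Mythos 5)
You should first be aware that the paper does not prove this statement at all: it is quoted verbatim, with citation, from \cite{vanapeldoorn2022quantumtomographyusingstatepreparation} and used as a black-box ingredient for the low-depth construction of \cref{sec:examples}. There is therefore no in-paper proof to compare your attempt against; what can be assessed is whether your reconstruction would actually establish the stated bounds.

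As it stands it would not, because the bias bound --- the only part of the theorem that makes the algorithm ``unbiased'' and hence usable in this paper --- rests on two assertions you do not justify. First, you claim the single-run output law is ``symmetric around $\theta$ up to a grid shift of size at most $2\pi\cdot 2^{-n}$.'' For textbook Kitaev/sinc-squared phase estimation this is false in general: the distribution is symmetric about $\theta$ only when $\theta$ lies on the measurement grid, and for off-grid $\theta$ the asymmetry produces a single-run bias on the scale of the resolution $1/M$, not of the scale $2^{-n}$. Since $M$ and $n$ are independent parameters here (depth $M/2$, grid $2^{-n}$, and $n$ can be taken far larger than $\log_2 M$), a plain median of $2m+1$ Kitaev runs would retain a bias of order $1/M$, which for large $n$ vastly exceeds $32\pi(m+1)2^{-n}$. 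Some explicit symmetrization mechanism tied to the $2^{-n}$ grid --- e.g.\ applying a uniformly random known phase offset from the grid before estimation and subtracting it afterwards --- is needed to make the residual bias scale with $2^{-n}$ at all; this is the missing idea, and it is where the hypothesis $n\geq \log_2(\pi m)$ and the factor $(m+1)2^{-n}$ actually originate. Second, even granting near-symmetry of each sample, your statement that the median ``inherits per-sample bias'' at the rate $O((m+1)2^{-n})$ via ``a careful expansion of the order-statistic density'' is precisely the hard step and is left as an assertion; medians do not propagate input bias linearly in any generic way, and the order-statistic computation must be carried out (or replaced by a coupling/monotonicity argument) for the claimed constant $32\pi$ to be meaningful. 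The tail bound and the resource accounting in your plan are fine in outline, though you should verify that the radius $\frac{10}{M}(1+2^{-n})$ really gives a per-run failure probability bounded by a constant small enough for Chernoff over $2m+1$ trials to yield exactly $2e^{-m/4}$.
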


In this example, the three parameters $n,m$ and $M$ are related to the bias, the precision and the failure probability in a rather complicated way. According to \cref{uqpe-ii-setting}, they should satisfy the following conditions
\begin{align}
    \begin{cases}
    \frac{10}{M}(1+2^{-n}) = \epsilon_\circ^{1-\beta}\quad [\text{accuracy } \epsilon]\\
    32\pi (m+1)2^{-n} \leq r\epsilon_\circ \quad [\text{bias } B]\\
    4\pi(m+1)2^{-n}+2e^{-m/4} \leq  \min\bigg(\frac{\delta_\circ(1-r-s)^2}{4\log (4\delta_\circ^{-1})}\epsilon_\circ^{2\beta},\frac{s\epsilon_\circ}{4\pi}\bigg) \quad [\text{failure probability } \delta]
    \end{cases}
    \label{system-conditions}
\end{align}
where $r$ and $s$ are some positive constants such that $r+s<1$. Similar to the algorithm of \cref{thm:unbiased_amp_est} there is a constraint between the bias and the failure probability ($B\leq 8\delta$) which makes setting all inequalities to equality impossible for $\beta\geq 1/2$. To circumvent this difficulty we are going to make two simplifications. First, we note that
\begin{align}
    \epsilon_\circ^2\min\bigg(\frac{\delta_\circ(1-r-s)^2}{4\log(4\delta_\circ^{-1})},\frac{s}{4\pi}\bigg)\leq \min\bigg(\frac{\delta_\circ(1-r-s)^2}{4\log(4\delta_\circ^{-1}) }\epsilon_\circ^{2\beta},\frac{s\epsilon_\circ}{4\pi}\bigg)
\end{align}
This amounts to setting $\delta = O(\epsilon_\circ^2)$ instead of $O(\epsilon_\circ^{\max(1,2\beta)})$ which doesn't make any real difference since the total query complexity involves $\log(\delta^{-1})$ only. Then, the above mentioned constraint becomes easier to manage if we set $B=r\epsilon_\circ^2$, which automatically satisfies the bias condition. Overall, we can replace the above system of conditions by a weaker system of equations 
\begin{align}
    \begin{cases}
    \frac{10}{M}(1+2^{-n}) = \epsilon_\circ^{1-\beta}\quad [\text{accuracy } \epsilon]\\
    32\pi (m+1)2^{-n} = r\epsilon_\circ^2 \quad [\text{bias } B]\\
    4\pi(m+1)2^{-n}+2e^{-m/4} = \epsilon_\circ^2\min\bigg(\frac{\delta_\circ(1-r-s)^2}{4\log(4\delta_\circ^{-1})},\frac{s}{4\pi}\bigg) \quad [\text{failure probability } \delta]
    \end{cases}
    \label{system-equations}
\end{align}
Solutions of \cref{system-equations} are guaranteed to satisfy \cref{system-conditions}. The constraint between the bias and the failure probability now becomes a constraint between the constants $r$ and $s$
\begin{align}
    r < 8 \min\bigg(\frac{\delta_\circ(1-r-s)^2}{4\log (4\delta_\circ^{-1})},\frac{s}{4\pi}\bigg).
\end{align}
We can choose $r=\frac{\delta_\circ}{4\log(4\delta_\circ^{-1})}$ and $s=1/2-r$. The last two equations of \cref{system-equations} then become
\begin{align}
    \begin{cases}
        \pi(m+1)2^{-n} = \frac{\epsilon_\circ^2\delta_\circ}{128\log(4\delta_\circ^{-1})}\\
        e^{-m/4} = \frac{\epsilon_\circ^2\delta_\circ}{64\log(4\delta_\circ^{-1})}
    \end{cases}
\end{align}
\begin{prop}[Low-depth algorithm from \cref{thm:unbiased_phase_est}]
Setting the parameters 
\begin{align}
    m = -4 \log\bigg(\frac{\epsilon_\circ^2\delta_\circ}{64\log(4\delta_\circ^{-1})}\bigg),\quad  n =\log_2\big( 128\pi(m+1)\log(4\delta_\circ^{-1})\epsilon_\circ^{-2}\big),\quad M = \frac{\epsilon_\circ^{1-\beta}}{10(1+2^{-n})}
\end{align}
provides an estimate $\bar{\theta}$ such that
\begin{align}
    P[|\bar{\theta}\ominus\theta|\leq \epsilon_\circ]\geq 1-\delta_\circ
\end{align}
with resources
\begin{align}
    D = O(\epsilon_\circ^{-1+\beta}),\quad N = O\bigg(\epsilon_\circ^{-1+\beta}\log(\epsilon_\circ^{-1})\log(\delta_\circ^{-1})\log(\log(\delta_\circ^{-1}))\bigg).
\end{align}
\end{prop}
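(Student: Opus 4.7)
The plan is to invoke the type-II conversion protocol of \cref{protocol-2} on the unbiased phase estimation theorem stated immediately above the proposition, after verifying that the parameter choices $m, n, M$ realize a UQPE-II with the required bias, precision and failure probability. Concretely, I would show that these parameters solve the system \cref{system-equations}; since any solution of \cref{system-equations} automatically satisfies the weaker inequalities in \cref{system-conditions}, this places us directly in the scope of the type-II theorem proved in \cref{sec:unbiased_to_low_depth}.

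First, I would substitute the three parameter values into the bounds of the unbiased phase estimation theorem. The definition $M = \epsilon_\circ^{1-\beta}/(10(1+2^{-n}))$ satisfies the accuracy equation by construction. Plugging $n = \log_2(128\pi(m+1)\log(4\delta_\circ^{-1})\epsilon_\circ^{-2})$ into $32\pi(m+1)2^{-n}$ yields $\epsilon_\circ^2/(4\log(4\delta_\circ^{-1})) = r\epsilon_\circ^2$ for the stated $r = \delta_\circ/(4\log(4\delta_\circ^{-1}))$, which proves the bias equation. The choice of $m$ then gives $e^{-m/4} = \epsilon_\circ^2\delta_\circ/(64\log(4\delta_\circ^{-1}))$, so summing with the bias contribution $4\pi(m+1)2^{-n}$ produces exactly the right-hand side of the failure-probability equation. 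The compatibility condition between $r$ and $s$ also needs a short check: with $r+s = 1/2$ one has $1-r-s = 1/2$, so the first term in the $\min$ equals $\delta_\circ/(16\log(4\delta_\circ^{-1})) = r/4$, giving $8\cdot r/4 = 2r > r$ as required; the second term is similarly handled since $s$ is close to $1/2$ for small $\delta_\circ$.

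Once UQPE-II is secured, I would invoke the circular adaptation sketched in \cref{sec:low_depth_phase_estimation}: draw a cheap low-precision, high-confidence preliminary estimate of $\theta$; use it to build an arc containing $\theta$ and all subsequent samples with probability at least $1-\delta_\circ/2$; map this arc isometrically and bijectively onto $[0,1]$; run the type-II protocol on the transported samples using the ordinary difference; and pull back via the inverse mapping to convert the high-precision estimate back into an angle, whose error under $\ominus$ coincides with the real-line error inside the arc. A final union bound keeps the overall failure probability below $\delta_\circ$.

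The resource count then follows mechanically: each run has depth $D = M/2 = O(\epsilon_\circ^{-1+\beta})$ since $2^{-n}$ contributes only a vanishing correction; the chosen $m$ satisfies $m = O(\log(\epsilon_\circ^{-1}\delta_\circ^{-1}) + \log\log(\delta_\circ^{-1}))$, so per-run query complexity $(2m+1)M$ multiplied by the $T = O(\epsilon_\circ^{-2\beta}\log(\delta_\circ^{-1}))$ parallel runs prescribed by \cref{protocol-2} reproduces the claimed $N$ up to the nested logarithms. The main obstacle is the bias--failure-probability coupling $B \leq 8\delta$ baked into the unbiased phase estimation bound, which prevents setting $B$ and $\delta$ independently and forces the choice $B = O(\epsilon_\circ^2)$ rather than $O(\epsilon_\circ)$; this is harmless because the per-run cost depends only logarithmically on $B^{-1}$, but it demands careful algebraic bookkeeping through \cref{system-equations} to land on a consistent triple $(m,n,M)$.
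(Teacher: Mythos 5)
Your proposal follows essentially the same route as the paper: the paper's ``proof'' of this proposition is exactly the derivation preceding it (reduce \cref{system-conditions} to \cref{system-equations}, choose $r=\delta_\circ/(4\log(4\delta_\circ^{-1}))$ and $s=1/2-r$, solve for $m,n,M$, and feed the resulting UQPE-II into the type-II protocol via the circular-arithmetic algorithm of the appendix), and your resource accounting matches. One arithmetic point to tidy: with $n$ as written, $32\pi(m+1)2^{-n}=\epsilon_\circ^2/(4\log(4\delta_\circ^{-1}))=r\epsilon_\circ^2/\delta_\circ$, not $r\epsilon_\circ^2$; the bias equation of \cref{system-equations} actually forces $n=\log_2\big(128\pi(m+1)\log(4\delta_\circ^{-1})\,\delta_\circ^{-1}\epsilon_\circ^{-2}\big)$, so the missing factor $\delta_\circ^{-1}$ (a typo already present in the stated parameters, which your substitution silently propagates into a false equality) should be restored.
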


\section{Conclusion}
\label{sec:conclusion}

This work presents a universal protocol to transform any unbiased amplitude/phase estimation algorithm into a corresponding low-depth algorithm. At the moment of writing, all unbiased algorithms found in the literature unfortunately rely on the quantum Fourier transform (QFT) subroutine, making them unfit for near-term devices. The only attempt so far to construct a QFT-free unbiased amplitude estimation algorithm is that of \cite{Rall_2023} but the dependence of the maximum depth on the bias turns out to be too strong. Given their potential, we strongly believe that finding a QFT-free unbiased algorithm should be the focus of near-term amplitude estimation research.
\section*{Acknowledgements}
We acknowledge valuable discussions with Stefan Woerner, Will Zeng, Armando Bellante and Shouvanik Chakrabarti. This work is supported by the National Research Foundation, Singapore, and A*STAR under its CQT Bridging Grant. We also acknowledge funding from the Quantum Engineering Programme (QEP 2.0) under grant NRF2021-QEP2-02-P05.
\appendix
\section{Phase estimation with circular arithmetic}
In the following all angles take value in $[0,2\pi)$.
\begin{defn}[Arc] An arc $\phi\frown \varphi$ is a set of angles $\theta$ such that $|\theta\ominus \phi|+|\theta\ominus\varphi|=|\phi\ominus\varphi|$. By convention we choose the order such that $\phi\ominus\varphi<0$ i.e. the arc runs counter clockwise from $\phi$ to $\varphi$.
\end{defn}
\begin{defn}[Mapping] For $\theta \in \phi\frown \varphi$ let $\mathcal{C}_{\phi\frown\varphi}(\theta) = (\theta\ominus\phi)/(\varphi\ominus\phi)$. 
\end{defn}

It is easy to see that $\mathcal{C}$ is a bijection between $\phi\frown \varphi$ and $[0,1]$. In particular $\mathcal{C}_{\phi\frown\varphi}^{-1}(r) = (\phi + r(\varphi\ominus \phi)) \mod 2\pi$ for $r\in [0,1]$. Moreover, $\mathcal{C}$ maps the circular difference to the usual difference and vice versa i.e. $\theta_1\ominus\theta_2 = \mathcal{C}_{\phi\frown\varphi}(\theta_1)-\mathcal{C}_{\phi\frown\varphi}(\theta_2)$.

Algorithm for transforming type-II UQPE's into a low depth algorithm.
\begin{tcolorbox}
\textbf{Preprocessing}: Set $B_\text{ref} :=\epsilon_\circ, \epsilon_\text{ref} := \pi/4,  \delta_\text{ref} := \delta_\text{main}$ defined below. Run the algorithm $\text{UQPE-II}(B_\text{ref},\epsilon_\text{ref},\delta_\text{ref})$ once and let $\tilde{\theta}_\text{ref}$ be the output. Let $\phi := (\tilde{\theta}_\text{ref}- \pi/8) \mod{2\pi}$ and $\varphi := (\tilde{\theta}_\text{ref} + \pi/8) \mod{2\pi}$. 

\textbf{Main part}: Let $T= \lceil \frac{2\log(4\delta_\circ^{-1})\epsilon_\circ^{-2\beta}}{(1-r-s)^2} \rceil$, run $T$ independent $\text{UQPE-II}(B_\text{main},\epsilon_\text{main},\delta_\text{main})$ with the following parameter setting
\begin{align}
    B_\text{main} := r\epsilon_\circ,\epsilon_\text{main} := \epsilon_\circ^{1-\beta} , \delta_\text{main} := \min\bigg(\frac{\delta_\circ}{2(T+1)},\frac{s\epsilon_\circ}{4\pi}\bigg)
    \label{uqpe-ii-setting}
\end{align}
and denote the outputs by $\tilde{\theta}_1,...,\tilde{\theta}_T$.

\textbf{Postprocessing}: Let $\phi^* := (\phi- \epsilon_\text{main}) \mod{2\pi}$ and $\varphi^* = (\varphi+\epsilon_\text{main})\mod{2\pi}$. For $i=1,2,...,T$, if $\tilde{\theta}_i\notin \phi^*\frown\varphi^*$ output $\bar{\theta}=0$, otherwise let $\tilde{c}_i := \mathcal{C}_{\phi^*\frown\varphi^*}(\tilde{\theta}_i)$. Compute $\bar{c} := \frac{1}{T}\sum \tilde{c}_i$. Output $\bar{\theta}:=\mathcal{C}_{\phi^*\frown\varphi^*}^{-1}(\bar{c})$.
\end{tcolorbox}
\begin{prop}
    The above algorithm returns $\bar{\theta}$ such that $P[|\bar{\theta}\ominus\theta|\geq \epsilon_\circ]\leq \delta_\circ$
    with resources $D = \tilde{O}(\epsilon_\circ ^{-1+\beta})$ and $N = \tilde{O}(\epsilon_\circ^{-1-\beta}\log(\delta_\circ^{-1}))$.
\end{prop}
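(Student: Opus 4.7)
The strategy is to reduce the circular phase estimation to the linear setting already handled by \cref{problem-2} (the Type-II UQAE analysis), by working inside the arc $\phi^*\frown\varphi^*$ via the isometric chart $\mathcal{C}:=\mathcal{C}_{\phi^*\frown\varphi^*}$. Once we condition on a good event under which (i) the true angle $\theta$ lies in the inner arc $\phi\frown\varphi$ and (ii) every main-phase estimate $\tilde{\theta}_i$ lies in the outer arc $\phi^*\frown\varphi^*$, the circular differences become (a constant multiple of) ordinary differences and the Hoeffding-plus-bias argument of \cref{problem-2} applies essentially verbatim to the $[0,1]$-valued samples $\tilde{c}_i:=\mathcal{C}(\tilde{\theta}_i)$.

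Concretely, I would introduce the event $E_\text{ref}:=\{\theta\in\phi\frown\varphi\}$, guaranteed by the preprocessing UQPE-II with probability at least $1-\delta_\text{ref}$ (once the reference precision is chosen no larger than the arc half-width $\pi/8$), and the events $G_i:=\{|\tilde{\theta}_i\ominus\theta|\le\epsilon_\text{main}\}$, each holding with probability at least $1-\delta_\text{main}$. By the union bound and the choice $\delta_\text{main}\le\delta_\circ/(2(T+1))$, the joint good event $G:=E_\text{ref}\cap\bigcap_i G_i$ satisfies $P(\bar G)\le\delta_\circ/2$. On $G$, every $\tilde{\theta}_i$ lies in $\phi^*\frown\varphi^*$, the algorithm enters its main branch, and one has the linear identity $\tilde{c}_i-c=(\tilde{\theta}_i\ominus\theta)/|\varphi^*\ominus\phi^*|$ with $|\varphi^*\ominus\phi^*|=\pi/4+2\epsilon_\text{main}\ge\pi/4$, and correspondingly $|\bar{\theta}\ominus\theta|=|\varphi^*\ominus\phi^*|\cdot|\bar{c}-c|$. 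Thus it suffices to show $P[|\bar{c}-c|\ge\epsilon_\circ/|\varphi^*\ominus\phi^*|]\le\delta_\circ/2$.

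At this point I would replay the proof of \cref{problem-2} with $\tilde{c}_i$ playing the role of $\tilde{a}_i$ and $C=1$ (since $\tilde{c}_i\in[0,1]$), precision $\epsilon_\text{main}/|\varphi^*\ominus\phi^*|$, failure probability $\delta_\text{main}$, and effective bias controlled by
\begin{align}
|\EE[\tilde{c}_i\mathbb{1}_{G_i}]-cP(G_i)|\le (B_\text{main}+\pi\delta_\text{main})/|\varphi^*\ominus\phi^*|,
\end{align}
which follows from writing $\EE[\tilde{\theta}_i\ominus\theta]=\EE[(\tilde{\theta}_i\ominus\theta)\mathbb{1}_{G_i}]+\EE[(\tilde{\theta}_i\ominus\theta)\mathbb{1}_{\bar{G}_i}]$ and bounding the bad-part expectation by $\pi\delta_\text{main}$. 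The constants $r,s$ in \eqref{uqpe-ii-setting} are chosen so that the good-part bias, the bad-part shift, and the Hoeffding deviation sum to at most $\epsilon_\circ/|\varphi^*\ominus\phi^*|$ in absolute value, exactly as in the proof of \cref{problem-2}, and the factor $|\varphi^*\ominus\phi^*|\ge\pi/4$ only helps the analysis.

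The main obstacle is purely bookkeeping: tracking how the bias bound $B_\text{main}=r\epsilon_\circ$ and the failure bound $\delta_\text{main}$ of the original circular algorithm translate, under the conditioning on $G_i$ and the affine rescaling by $|\varphi^*\ominus\phi^*|$, into the effective bias and precision needed for the Type-II argument — and verifying that the parameter choices in \eqref{uqpe-ii-setting} satisfy the corresponding constraint $r+s<1$ with room to spare. The resource analysis is then immediate: the maximum depth is that of a single UQPE-II call at precision $\epsilon_\text{main}=\epsilon_\circ^{1-\beta}$, giving $D=\tilde{O}(\epsilon_\circ^{-1+\beta})$, while the total cost across the $T+1$ independent runs is $(T+1)\cdot\tilde{O}(\epsilon_\text{main}^{-1}\log\delta_\text{main}^{-1})=\tilde{O}(\epsilon_\circ^{-1-\beta}\log\delta_\circ^{-1})$.
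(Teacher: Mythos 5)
Your proposal is correct and follows essentially the same route as the paper's proof: the same good event $G$ combining the preprocessing success with the $T$ main-phase successes, the same union bound giving $P(\bar G)\le\delta_\circ/2$, the same good/bad decomposition to control the conditional bias, and the same transfer to $[0,1]$ via the chart followed by the Type-II Hoeffding argument. If anything you are slightly more careful than the paper about the chart's normalization (the paper calls $\mathcal{C}$ isometric and writes $|\tilde{c}_i-c|=|\tilde{\theta}_i\ominus\theta|$, whereas the true relation carries the factor $1/|\varphi^*\ominus\phi^*|$, which, as you note, cancels in the Hoeffding exponent and is harmless).
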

\begin{proof}
    Denote by $G_\text{pre}$ the event that $\theta \in \phi\frown \varphi$, $G_i$ the event that $|\tilde{\theta}_i\ominus \theta|\leq \epsilon_\text{main}$ for $i=1,...,T$, $G_\text{main} = \cap_{i=1}^T G_i$ and finally $G= G_\text{pre} \cap G_\text{main}$.  By construction $P[\overline{G}_\text{pre}]\leq \delta_\text{ref}$ and by the union bound $P[\overline{G}_\text{main}]\leq T\delta_\text{main}$ thus
    \begin{align}
        P[\overline{G}]\leq P[\overline{G}_\text{pre}]+[\overline{G}_\text{main}]\leq \delta_\text{ref}+T\delta_\text{main}\leq \delta_\circ/2.
    \end{align}
    Decompose each estimator in the main main into good and bad part
    \begin{align}
        \tilde{\theta}_i = \tilde{\theta}_i \mathbf{1}(G_\text{pre}\cap G_i) + \tilde{\theta}_i \mathbf{1}(\overline{G_\text{pre}\cap G_i}):= \tilde{\theta}_i^{(\text{good})}+\tilde{\theta}_i^{(\text{bad})}.
    \end{align}
    Since $P(\overline{G_\text{pre}\cap G_i})\leq \delta_\text{ref}+\delta_\text{main}\leq s\epsilon_\circ/2\pi$ we have $|\EE[\tilde{\theta}_i^{(\text{bad})}]|\leq s\epsilon_\circ$. Combined with the fact that $(\tilde{\theta}\ominus\theta) - \tilde{\theta}_i^{(\text{bad})}\leq \tilde{\theta}_i^{(\text{good})}\ominus\theta\leq (\tilde{\theta}\ominus\theta) + \tilde{\theta}_i^{(\text{bad})}$ we obtain the following bound on the bias of the good part
    \begin{align}
|\EE[\tilde{\theta}_i^{(\text{good})}\ominus\theta]|\leq |\EE[\tilde{\theta}_i\ominus\theta]|+ |\EE[\tilde{\theta}_i^{(\text{bad})}]| \leq (r+s)\epsilon_\circ.
    \end{align}
    If $G$ happens, then $\theta\in \phi\frown \varphi$ and for all $i$ we have $|\tilde{\theta}_i\ominus\theta|\leq \epsilon_\text{main}$. In other words, the true value belongs to the arc $\phi\frown \varphi$ and all estimators belong to the arc of length $2\epsilon_\text{main}$ around the true value and thus they all belong to the slightly larger arc $\phi^*\frown\varphi^*$. Thanks to the difference preserving property of the mapping, we have for all $i=1,2,...,T$
    \begin{align}
        |\tilde{c}_i-c| = |\tilde{\theta}_i\ominus\theta| \leq \epsilon_\circ^{1-\beta},\quad |\EE[\tilde{c}_i-c]| = |\EE[\tilde{\theta}_i^{(\text{good})}\ominus \theta]|\leq (r+s)\epsilon_\circ
    \end{align}
    We are now back to the domain of real numbers with the same setup as before. According to the triangle inequality 
    \begin{align}
        |\bar{c}-c|\leq |\bar{c}-\EE[\bar{c}]|+|\EE[\bar{c}]-c|\leq |\bar{c}-\EE[\bar{c}]|+(r+s)\epsilon_\circ.
    \end{align}
    By Hoeffding's inequality
    \begin{align}
        P[|\bar{c}-c|\geq \epsilon_\circ] &\leq P[|\bar{c}- \EE[\bar{c}]|\geq \epsilon_\circ(1-r-s)]\\
        & \leq 2\exp(-\frac{T\epsilon_\circ^2(1-r-s)^2}{2\epsilon^{2-2\beta}})\leq \delta_\circ/2.
    \end{align}
    Now using the difference preserving property of the inverse mapping, we conclude that
    \begin{align}
        P[|\bar{\theta}\ominus\theta|\geq \epsilon_\circ]  &= P[(|\bar{\theta}\ominus\theta|\geq \epsilon_\circ)\cap G]+P[(|\bar{\theta}\ominus\theta|\geq \epsilon_\circ)\cap\bar{G}]\\
        &\leq P[|\bar{c}-c|\geq \epsilon_\circ]+ P[\bar{G}]\leq \delta_\circ.
    \end{align}
\end{proof}
\section{Correcting Rall-Fuller's low-depth algorithm}
\label{sec: rall-fuller}
In this section we show that there is a technical error in the low-depth algorithm of Rall and Fuller \cite{Rall_2023}. We provide a modification that fixes this error.
\subsection{Overview of Rall-Fuller's original algorithm}
A polynomial $P(x)\in \mathbb{C}[x]$ is called \textit{Pellian} if there exists $Q(x)\in \mathbb{C}[x]$ such that $P$ and $Q$ are of opposite parity and 
\begin{align}
    |P(x)|^2+(1-x^2)|Q(x)|^2=1\quad \forall x\in [-1,1]\;.
\end{align}
Given $a\in [0,1]$, quantum signal processing \cite{Gily_n_2019} allows one to sample from a Bernoulli distribution of parameter $|P(a)|^2$ using $O(\deg P)$ queries if $P$ is Pellian.  Moreover, this also works if $P$ is only \textit{semi-Pellian}, which means $P$ is the real part of some Pellian polynomial. A polynomial $P$ is semi-Pellian if it has fixed parity and $|P(x)|\leq 1$ for all $x\in [-1,1]$. The idea of Rall and Fuller is to start with the confidence interval $[0,1]$ and to repeatedly shrink it by a factor of $0.9$ for $T:=\lceil\log_{0.9}\epsilon\rceil$ times until we get a confidence interval of length smaller than $\epsilon$. For each confidence interval $[a_\text{min}, a_\text{max}]$ the shrinkage is done by sampling from a Bernoulli distribution with parameter $|P(a)|^2$ for a certain number of shots where the semi-Pellian polynomial $P$ is sufficiently bounded away from $1/2$ on the left and right segment of the interval. 
\begin{figure}[ht]
    \centering
    \includegraphics[width=0.5\linewidth]{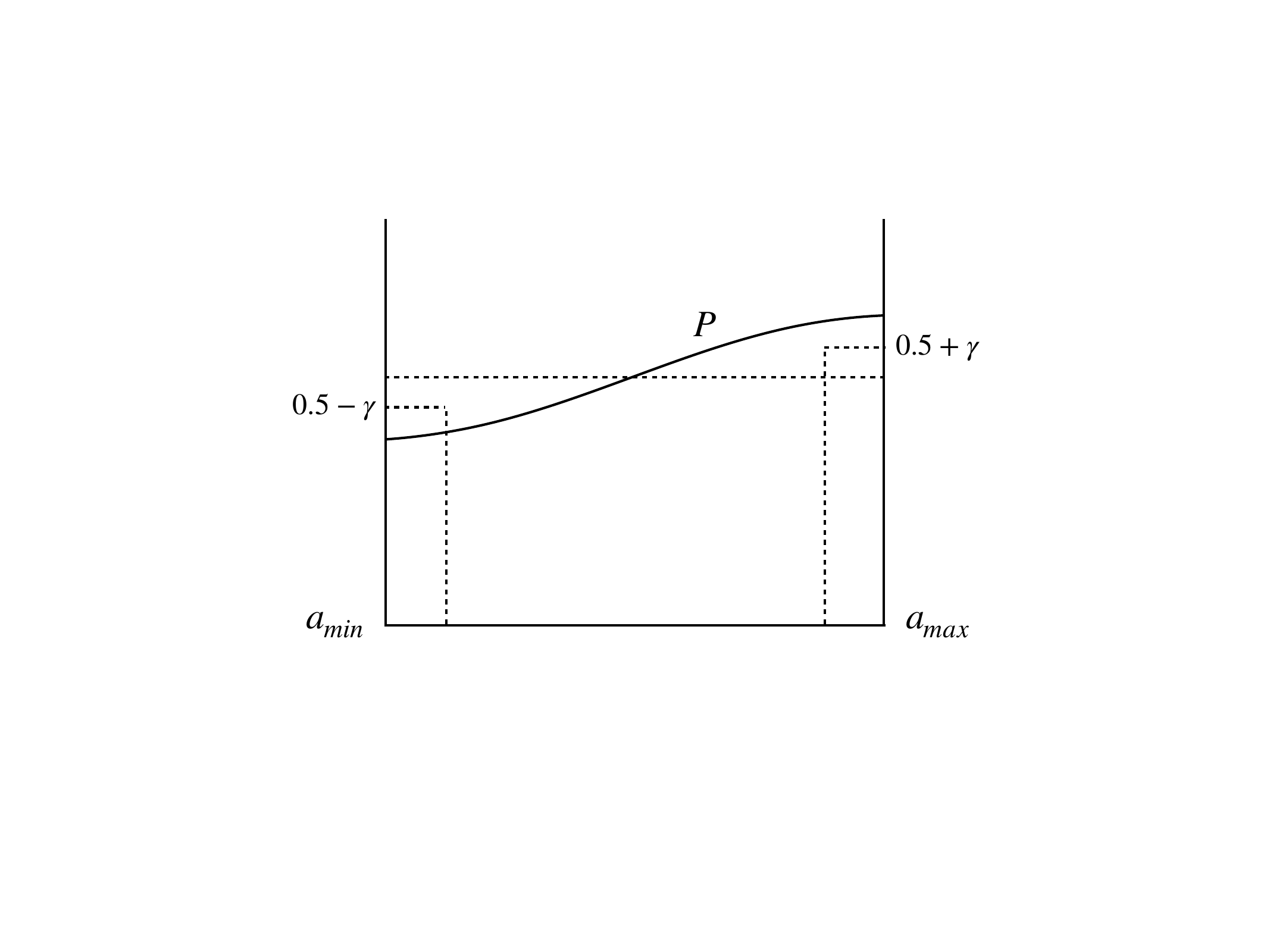}
    \caption{The desired behavior of the semi-Pellian polynomial on a confidence interval.}
    \label{fig:semi-Pellian}
\end{figure}

More precisely, let $\Delta := a_\text{max}-a_\text{min}$ then we want $P$ to satisfy
\begin{equation}
\begin{aligned}
	&P(a)\leq \frac{1}{2}-\gamma \quad \forall a\in [a_\text{min},a_\text{min}+0.1\Delta]\;,\\
	&P(a)\geq \frac{1}{2}+\gamma \quad \forall a\in [a_\text{max}-0.1\Delta,a_\text{max}]\;.
\end{aligned}
\label{desired_bound}
\end{equation}
Requiring the algorithm to be low-depth means that the degree of $P$ must be of the order of $\Delta^{-1+\beta}$. With such restriction on the degree, it turns out that the gap $\gamma$ is at best of order $O(\Delta^\beta)$. It then follows from the Chernoff bound that with $O(\gamma^{-2})=O(\Delta^{-2\beta})$ number of shots one can reject the hypothesis that $a$ belongs to the left or the right segment with high probability. The depth and the number of queries at each iteration step are thus $O(\Delta^{-1+\beta})$ and $O(\Delta^{-1-\beta})$ respectively. Since the length of the confidence interval shrinks as a geometric series, the final step dominates all and the maximum depth and the total query complexity of the algorithm are given by $O(\epsilon^{-1+\beta})$ and $O(\epsilon^{-1-\beta})$.

Now that we have understood the main idea behind Rall and Fuller's low-depth algorithm, let us fill in some technical details. The algorithm relies on two lemmas

\begin{lemma}
    Let $\tau,\eta\in (0,1)$ and $\kappa(\tau) :=\frac{1}{2}\sqrt{2\ln(2/(\pi\tau^2))}$. Consider any $k\in [1,2/\Delta]$ such that $a_\text{mid}\geq \kappa/k$ where $a_\text{mid}:=(a_\text{min}+a_\text{max})/2$. There exists an even semi-Pellian polynomial $P(a)$ such that
\begin{align}
	P(a)\leq \frac{1}{2}-0.11k(a-a_\text{mid})+\eta+0.25 \tau \quad \forall a\in [a_\text{min},a_\text{mid}]\;,\\
	P(a)\geq \frac{1}{2}+0.11k(a-a_\text{mid})-\eta-0.25 \tau \quad \forall a\in [a_\text{mid},a_\text{max}]\;.
\end{align}
The degree of this polynomial is $O(\sqrt{(k^2+\log(1/\eta))\log(1/\eta)}$.
\label{Rall-Fuller lemma 1}
\end{lemma}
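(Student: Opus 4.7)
The plan is to construct $P$ as an even polynomial approximation of a smoothed step function whose transition sits at $|a|=a_\text{mid}$. A convenient explicit target is
\[
F(a) \;=\; 1 \,+\, \frac{1}{2}\bigl(\mathrm{erf}(c(a-a_\text{mid})) - \mathrm{erf}(c(a+a_\text{mid}))\bigr),
\]
with steepness parameter $c = \Theta(k)$. This $F$ is manifestly even in $a$, takes values in $[0,1]$, is close to $0$ on $(-a_\text{mid},a_\text{mid})$ and close to $1$ outside, and equals $\frac{1}{2}$ plus a correction of order $\mathrm{erfc}(2c a_\text{mid})$ at $a=\pm a_\text{mid}$. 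Its derivative
\[
F'(a) \;=\; \frac{c}{\sqrt\pi}\bigl(e^{-c^2(a-a_\text{mid})^2} - e^{-c^2(a+a_\text{mid})^2}\bigr)
\]
equals $\frac{c}{\sqrt\pi}\bigl(1-e^{-4 c^2 a_\text{mid}^2}\bigr)$ at $a=a_\text{mid}$; the hypothesis $a_\text{mid}\ge \kappa(\tau)/k$ is exactly what makes $2 c a_\text{mid}\gtrsim \kappa(\tau)=\tfrac{1}{2}\sqrt{2\ln(2/(\pi\tau^2))}$, so the ``ghost'' tails $e^{-4 c^2 a_\text{mid}^2}$ and $\mathrm{erfc}(2 c a_\text{mid})$ are each $O(\tau)$, matching the $0.25\tau$ slack in the lemma.

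Next I would invoke the standard QSP polynomial approximation of the error function \cite{Gily_n_2019}: for every $c,\eta>0$ there is a polynomial of degree $O(\sqrt{(c^2+\log(1/\eta))\log(1/\eta)})$ approximating $\mathrm{erf}(cx)$ on $[-1,1]$ to precision $\eta/2$. Summing two such polynomials evaluated at the shifted arguments $c(x\mp a_\text{mid})$, then symmetrising $\widetilde P\mapsto(\widetilde P(a)+\widetilde P(-a))/2$ (which fixes the even target $F$ up to $O(\eta)$), produces an even polynomial $P$ of the claimed degree with $|P(a)-F(a)|\le\eta$ on $[-1,1]$. Since $F\in[0,1]$, this immediately gives $|P|\le 1+\eta$, and a cosmetic rescaling absorbed into the numerical constants delivers the semi-Pellian condition $|P|\le 1$ with fixed even parity.

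For the two linear bounds, on $[a_\text{mid},a_\text{max}]$ I would write $F(a)=F(a_\text{mid})+\int_{a_\text{mid}}^a F'(t)\,\mathrm{d}t$ and lower-bound the integrand by the minimum of $F'$ on the interval, using the Gaussian shape of $F'$ together with the $\mathrm{erfc}$ tail estimate. Choosing $c$ of order $0.2 k$ makes this minimum at least $0.11 k$ modulo an $O(\tau)$ defect, which combined with the approximation slack $|P-F|\le\eta$ yields the desired lower bound. The companion upper bound on $[a_\text{min},a_\text{mid}]$ follows by the evenness of $F$ applied to $1-F$, or equivalently by the reflection $a\mapsto 2 a_\text{mid}-a$ combined with the same monotonicity argument on the symmetric piece.

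The main obstacle is the tension between the two demands on $c$: the slope requirement $F'(a_\text{mid})\gtrsim k$ pushes $c$ up to order $k$, while the erf-approximation degree $O(c\sqrt{\log(1/\eta)})$ wants $c$ small. The Gaussian tail factor $e^{-4c^2 a_\text{mid}^2}$ is what reconciles the two, and the threshold $a_\text{mid}\ge \kappa(\tau)/k$ is chosen precisely so that $c=\Theta(k)$ simultaneously secures both an $\Omega(k)$ slope and an $O(\tau)$ tail. Carefully tracking the numerical constants $0.11$ and $0.25$ through this trade-off is where most of the detailed work will live.
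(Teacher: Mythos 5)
Your construction is essentially the one the paper (following Rall--Fuller) uses: your target $F(a)=1+\tfrac12\big(\mathrm{erf}(c(a-a_\text{mid}))-\mathrm{erf}(c(a+a_\text{mid}))\big)$ with $c=k$ is exactly the paper's $P_{\tau,\eta,k,a_\text{mid}}(a)=f_0(a-a_\text{mid})+f_0(-a-a_\text{mid})$ up to the normalization by $4\eta+\tau+2$, and the symmetrized erf-approximation together with the stated degree bound is the same argument. One correction on the bookkeeping: under $a_\text{mid}\ge\kappa(\tau)/k$ the ghost tail $\mathrm{erfc}(2ka_\text{mid})$ is in fact $O(\tau^4)$, so the $0.25\tau$ slack does not come from the tails but from the normalizing denominator $4\eta+\tau+2$ that you dismiss as a ``cosmetic rescaling'' (it shifts the value at $a_\text{mid}$ by $\tau/4$), and with $c=k$ the slope margin for the $0.11k$ constant is ample, whereas your choice $c=0.2k$ leaves essentially no room once that normalization loss is charged.
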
 
\begin{lemma}
    Let there be a coin that comes up heads with probability $x^2$. For any $\gamma, \delta >0$ let $A_{\gamma, \delta}$ be the probability of obtaining more than $\frac{1}{4}+\gamma^2$ heads in $\lceil \frac{1}{2}\gamma^{-2}\ln(\delta^{-1})\rceil$ tosses
    \begin{align}
	&\text{if}\quad x\geq \frac{1}{2}+\gamma \quad \text{then} \quad A_{\gamma,\delta}\geq 1-\delta\;,\\
	&\text{if}\quad x \leq \frac{1}{2} - \gamma \quad \text{then} \quad A_{\gamma,\delta}\leq \delta\;.
    \end{align}
\label{Rall-Fuller lemma 2}
\end{lemma}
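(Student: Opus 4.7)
The plan is to reduce the claim to a one-sided Hoeffding bound by exploiting the algebraic identity $(\tfrac{1}{2}\pm\gamma)^2=\tfrac{1}{4}\pm\gamma+\gamma^2$. This identity places the threshold $\tfrac{1}{4}+\gamma^2$ at a distance of exactly $\gamma$ from the true head probability $x^2$ in either regime, and the prescribed sample count $n=\lceil\tfrac{1}{2}\gamma^{-2}\ln(\delta^{-1})\rceil$ is tuned so that the Hoeffding bound with deviation $\gamma$ fails with probability at most $\delta$. Interpreting ``more than $\tfrac{1}{4}+\gamma^{2}$ heads'' as the empirical fraction of heads exceeding that threshold, the rest is direct.

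Concretely, let $\bar X$ be the fraction of heads in the $n$ tosses, so $\EE[\bar X]=x^{2}$. If $x\geq\tfrac{1}{2}+\gamma$, monotonicity gives $x^{2}\geq\tfrac{1}{4}+\gamma+\gamma^{2}$, so the complementary event $\{\bar X\leq\tfrac{1}{4}+\gamma^{2}\}$ is contained in $\{\bar X\leq\EE[\bar X]-\gamma\}$. The one-sided Hoeffding inequality for sums of independent $[0,1]$-valued variables then yields
\begin{equation*}
P\!\left[\bar X\leq\tfrac{1}{4}+\gamma^{2}\right]\ \leq\ \exp(-2n\gamma^{2})\ \leq\ \delta,
\end{equation*}
which is exactly $A_{\gamma,\delta}\geq 1-\delta$. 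Symmetrically, if $x\leq\tfrac{1}{2}-\gamma$ then $x^{2}\leq\tfrac{1}{4}-\gamma+\gamma^{2}$, and the upper-tail one-sided Hoeffding bound applied to $\{\bar X\geq\EE[\bar X]+\gamma\}$ gives $A_{\gamma,\delta}\leq\exp(-2n\gamma^{2})\leq\delta$.

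There is no real technical obstacle; the only thing worth being careful about is using the \emph{one-sided} rather than two-sided Hoeffding bound, so that the constant $\tfrac{1}{2}$ in the sample count $n$ is preserved (a two-sided application would worsen the bound by a factor of $2$ inside the logarithm). The content of the lemma really lies in the choice of the asymmetric threshold $\tfrac{1}{4}+\gamma^{2}$, which is precisely the midpoint halfway between $(\tfrac{1}{2}-\gamma)^{2}$ and $(\tfrac{1}{2}+\gamma)^{2}$ shifted to absorb the $\gamma^{2}$ quadratic correction, so that the same deviation parameter $\gamma$ controls both tails.
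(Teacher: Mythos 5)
The paper states this lemma without proof, merely citing \cite{Rall_2023} and alluding to ``the Chernoff bound''; your one-sided Hoeffding argument built on the identity $(\tfrac{1}{2}\pm\gamma)^2=\tfrac{1}{4}\pm\gamma+\gamma^2$ is exactly that intended argument and is correct, including the observation that the one-sided bound is what preserves the constant $\tfrac{1}{2}$ in the sample count. The only implicit assumption worth flagging is $x\geq 0$ (so that $x\leq\tfrac{1}{2}-\gamma$ indeed implies $x^2\leq(\tfrac{1}{2}-\gamma)^2$), which holds in the amplitude-estimation context where the lemma is applied.
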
 
The first lemma constructs the desired semi-Pellian polynomial on a certain confidence interval. Basically one starts with a polynomial approximation of the error function and then applies some linear transformations along with a shift to get a polynomial $f$ with an approximate behaviour as in \cref{fig:semi-Pellian}. In order to achieve the semi-Pellian property, one needs the final result to has a well defined parity. This is done by replacing $f(x)$ with $f(x)+f(-x)$. However doing this might violate the desired gap so we want $f(-x)$ to be sufficiently small on the left and right segments. Since the un-scaled version of $f$ approximates the error function and we have shifted the origin to the mid point$a_\text{mid}$, this amounts to requiring $a_\text{mid}$ to be sufficiently large. Moreover, we want the degree of the semi-Pellian polynomial to be of order $O(\Delta^{-1+\beta})$ so we must choose $k$ to be $O(\Delta^{-1+\beta})$ as well. Then, the constraint $a_\text{mid}\geq \kappa/k$ means that such polynomial construction is only possible when the mid point is of order $\Tilde{O}(\Delta^{1-\beta})$. Note that since $a_\text{mid}= a_\text{min}+\frac{\Delta}{2}$, we are only guaranteed that $a_\text{mid}\geq \frac{\Delta}{2}$ and we thus have to consider two cases, depending whether $a_\text{mid}$ is big or small compared to $\Delta^{1-\beta}$. The full algorithm reads

\underline{\textbf{Rall-Fuller's low depth algorithm}} 
\begin{itemize}
	\item Initialize $a_\text{min}^{(0)}\leftarrow 0 $ and $a_\text{max}^{(0)}\leftarrow 1$. Let $\Delta_t = a_\text{max}^{(t)}-a_\text{min}^{(t)}$ and $a_\text{mid}^{(t)} = \frac{1}{2}(a_\text{max}^{(t)}+a_\text{min}^{(t)})$.
	\item Let $T = \lceil \log_{0.9}\epsilon\rceil$.  For $t = 0,1,...,T-1$:
	\begin{itemize}
		\item If  $a_\text{mid} ^{(t)}\geq \frac{1}{2}\Delta_t^{1-\beta}$, set:
		\begin{align}
			\eta_t,\tau_t &:= 0.01\Delta_t^\beta\;,\label{start}\\
			k_t &:=\frac{1}{2}\kappa(\tau_t)\Delta_t^{-(1-\beta)}\;,\label{k-1}\\
			\gamma_t &:= 0.01\Delta_t^{\beta}\;.
		\end{align}
		
		\item Else, set:
		\begin{align}
			\eta_t,\tau_t &:= 0.01\;,\\
			k_t &:=\frac{1}{2}\kappa(\tau_t)\Delta_t^{-1};\label{k-2}\;,\\
			\gamma_t &:= 0.01\;.\label{end}
		\end{align}
		\item Obtain $P_t(a)$ from \cref{Rall-Fuller lemma 1}. Set $\delta_t := \delta/T$, $m_t := \lceil\frac{1}{2}\gamma_t^{-2}\ln(\delta_t^{-1})\rceil$ and sample from the Bernoulli distribution with parameter $|P_t(a)|^2$ a total of $m_t$ times and let $S^{(t)}$ be the sample sum. 
		\item If $S^{(t)}\geq m_t( \frac{1}{4}+\gamma_t^2)$, set 
		\begin{align}
			[a_\text{min}^{(t+1)},a_\text{max}^{(t+1)}] : = [a_\text{min}^{(t)}+0.1\Delta_t,a_\text{max}^{(t)}] \;.
		\end{align}
		\item Else, set 
		\begin{align}
			[a_\text{min}^{(t+1)},a_\text{max}^{(t+1)}] : = [a_\text{min}^{(t)},a_\text{max}^{(t)}-0.1\Delta_t] \;.
		\end{align}
	\end{itemize}
	\item Output $\hat{a}\leftarrow a_\text{mid}^{(T)}$.
\end{itemize}
\begin{theorem}
    The output $\hat{a}$ of the above algorithm satisfies
    \begin{align}
        P[|\hat{a}-a|\geq \epsilon]\leq \delta\;.
    \end{align}
    The maximum depth and the total query complexity of the algorithm are given by
    \begin{align}
        D &= \Tilde{O}(a^{-1/(1-\beta)}+\epsilon^{-(1-\beta)})\;,\label{Rall-Fuller D}\\
	N &= \Tilde{O}((a^{-1/(1-\beta)}+\epsilon^{-(1+\beta)})\log(\delta^{-1}))\;.\label{Rall-Fuller N}
    \end{align}
    where $\Tilde{O}$ hides log factors in $\epsilon$.
\label{Rall-Fuller theorem}
\end{theorem}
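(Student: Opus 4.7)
The plan is to prove the theorem in three stages: per-iteration correctness of the interval shrinkage, a union bound over the $T$ rounds to certify the overall success probability, and a separate resource accounting for the two regimes of the algorithm.

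First, I would verify that each iteration admits a valid semi-Pellian polynomial with the advertised gap. For each $t$, one checks that $k_t \in [1, 2/\Delta_t]$ and $a_\text{mid}^{(t)} \geq \kappa(\tau_t)/k_t$, so that \cref{Rall-Fuller lemma 1} applies and yields $P_t$ obeying \cref{desired_bound} with gap at least $0.11 k_t (0.4\Delta_t) - \eta_t - 0.25\tau_t$ on each extremal segment. In case 1, substituting $k_t = \tfrac{1}{2}\kappa(\tau_t)\Delta_t^{-(1-\beta)}$ reduces this to $0.022\,\kappa(\tau_t)\Delta_t^\beta \geq \gamma_t + \eta_t + 0.25\tau_t$, which holds because $\kappa(\tau_t) = \Theta(\sqrt{\log(1/\Delta_t)})$ dominates the constants on the right as $\Delta_t \to 0$. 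Case 2 is analogous with $\gamma_t,\eta_t,\tau_t$ of constant size and $k_t = \tfrac{1}{2}\kappa(\tau_t)\Delta_t^{-1}$.

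Next, I would apply \cref{Rall-Fuller lemma 2} to the Bernoulli sampling in each round. When $a$ lies in the leftmost (resp.\ rightmost) tenth of the current interval, comparing $S^{(t)}$ to $m_t(\tfrac{1}{4} + \gamma_t^2)$ correctly discards the opposite tenth with probability at least $1 - \delta_t$; when $a$ lies in the middle $80\%$, either cut is safe. Hence at each iteration the event ``$a$ survives the shrinkage'' has probability at least $1 - \delta_t = 1 - \delta/T$, and a union bound over $T = \lceil \log_{0.9}\epsilon \rceil$ rounds gives overall failure probability at most $\delta$. Conditioned on success, $\Delta_T \leq 0.9^T \leq \epsilon$ so $|\hat{a} - a| \leq \Delta_T / 2 \leq \epsilon$.

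For the resource bounds, I would partition the iterations by regime. In case 2 ($a_\text{mid}^{(t)} < \tfrac{1}{2}\Delta_t^{1-\beta}$, which holds while $\Delta_t^{1-\beta}$ is at least of order $a$), the polynomial has degree $\tilde{O}(\Delta_t^{-1})$ and $m_t = O(\log(T/\delta))$, contributing per-round depth $\tilde{O}(\Delta_t^{-1})$ and queries $\tilde{O}(\Delta_t^{-1}\log(\delta^{-1}))$. In case 1, the degree drops to $\tilde{O}(\Delta_t^{-(1-\beta)})$ while $m_t = \tilde{O}(\Delta_t^{-2\beta}\log(\delta^{-1}))$, so the per-round costs are $\tilde{O}(\Delta_t^{-(1-\beta)})$ depth and $\tilde{O}(\Delta_t^{-(1+\beta)}\log(\delta^{-1}))$ queries. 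Because $\{\Delta_t\}$ is geometric, each sum is dominated by the smallest $\Delta_t$ in its regime; the regime transition occurs near $\Delta_t \sim a^{1/(1-\beta)}$ and the algorithm terminates near $\Delta_t \sim \epsilon$, which gives the max-depth bound \cref{Rall-Fuller D} and the total query bound \cref{Rall-Fuller N}.

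The step I expect to be the main obstacle is the very first one. The case 1 predicate only provides $a_\text{mid}^{(t)} \geq \tfrac{1}{2}\Delta_t^{1-\beta}$, whereas \cref{Rall-Fuller lemma 1} requires $a_\text{mid}^{(t)} \geq \kappa(\tau_t)/k_t$, which evaluates to roughly $2\sqrt{\log(1/\Delta_t)}\,\Delta_t^{1-\beta}$, an extra logarithmic factor that is not obviously absorbed; and case 2 supplies no lower bound on $a_\text{mid}^{(t)}$ at all, even though the analogue $\kappa/k_t \approx 2\sqrt{\log(1/\tau_t)}\,\Delta_t$ must still be met. This mismatch is presumably the ``technical error'' flagged at the start of this section, and the sketch above should therefore be read as a scaffold whose first step needs repair by the modification presented in the sequel.
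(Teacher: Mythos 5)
Your skeleton --- per-round construction of the semi-Pellian polynomial via \cref{Rall-Fuller lemma 1}, the threshold test via \cref{Rall-Fuller lemma 2} with a union bound over $T=\lceil\log_{0.9}\epsilon\rceil$ rounds, and resource accounting over the two regimes with the geometric sums dominated by $\Delta_t\sim a^{1/(1-\beta)}$ and $\Delta_t\sim\epsilon$ --- is exactly the intended argument, and you have correctly located the one step that fails: the hypothesis $a_\text{mid}^{(t)}\geq\kappa(\tau_t)/k_t$ of \cref{Rall-Fuller lemma 1} is not implied by the case predicates. That is precisely the error this appendix exists to repair.

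Two points, however. First, your quantification of the mismatch is off: since $k_t$ is itself proportional to $\kappa(\tau_t)$, the ratio $\kappa(\tau_t)/k_t$ equals exactly $2\Delta_t^{1-\beta}$ in case 1 and exactly $2\Delta_t$ in case 2 --- the $\sqrt{\log}$ factor cancels --- so the shortfall against the guaranteed $a_\text{mid}^{(t)}\geq\frac{1}{2}\Delta_t^{1-\beta}$ (resp.\ $\frac{1}{2}\Delta_t$) is a constant factor of $4$, not a logarithmic one, and in particular it cannot be absorbed as $\Delta_t\to 0$. Second, and more importantly, the repair is not a routine patch of your step 1, so the proposal as written does not yet prove the theorem. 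In case 1 one must enlarge $k_t$ to $2\kappa(\tau_t)\Delta_t^{-(1-\beta)}$ so that $\kappa/k_t=\frac{1}{2}\Delta_t^{1-\beta}$ matches the predicate; this then threatens the other constraint $k_t\leq 2/\Delta_t$ and forces an additional condition $\Delta_t^{\beta}\leq 0.4$, verified by showing $\tau\mapsto\tau\kappa(\tau)$ is increasing and at most $0.01$ there. In case 2 no rescaling of $k_t$ satisfies both constraints simultaneously, so one must abandon \cref{Rall-Fuller lemma 1} altogether and establish the gap \eqref{desired_bound} directly from the error-function approximation, using that $a\mapsto\mathrm{erf}(k(a-a_\text{mid}))+\mathrm{erf}(k(-a-a_\text{mid}))$ is increasing to evaluate $P$ at the segment endpoints. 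Without one of these two routes your first stage cannot be completed; with them, the rest of your argument (union bound and resource accounting) goes through as you describe.
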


\subsection{The error}
We argue that there is an error in the proof of correctness of \cref{Rall-Fuller theorem}. The authors claimed that the choice of parameters in equations \eqref{start}-\eqref{end} satisfies the constraints in \cref{Rall-Fuller lemma 1}. We show that this is not correct. There are two constraints that need to be satisfied. First, $k$ needs to be in the interval $[1,2/\Delta]$ and second, we need $a_\text{mid}\geq \kappa(\tau)/k$. We observe that the choice of $k$ in equations \eqref{k-1} and \eqref{k-2} does not satisfy the second constraint. Indeed, equation \eqref{k-2} leads to $\kappa/k=2\Delta$ and it is not always true that $a_\text{mid}\geq 2\Delta$. Instead, as mentioned above, what is always guaranteed to hold is that $a_\text{mid}\geq \frac{\Delta}{2}$. It thus seems that there is a typo in equations \eqref{k-1} and \eqref{k-2} and the authors want to set $k$ to $2\kappa(\tau_t)\Delta_t^{-1+\beta}$ and $2\kappa(\tau_t)\Delta_t^{-1}$ respectively. Even though it can be shown that this choice of parameter satisfies the second constraint, the first constraint is now violated. With $\tau_t =0.01$, $\kappa(\tau_t)\approx 2.1$ so $k = 2\kappa(\tau_t)\Delta_t^{-1}\approx 4.2\Delta^{-1}$ no longer belongs to the interval $[1,2/\Delta]$ as required.

In the following subsection, we slightly modify the parameter setting and show that the resulting semi-Pellian polynomial meets the requirement \eqref{desired_bound}.
\subsection{The new parameter setting}
For the ease of following, let us first present the explicit construction of the semi-Pellian polynomial as it was introduced in \cite{Rall_2023}.
\begin{defn}[Error function] The error function is given by
    \begin{align}
        \text{erf}(x) := \frac{2}{\sqrt{\pi}}\int_{0}^x e^{-t^2}dt\;.
    \end{align}
\end{defn}
\begin{prop}[Polynomial approximation of the error function \cite{low2017hamiltoniansimulationuniformspectral}] For all $k,\eta>0$ there exists an odd polynomial $P_{\text{erf},k,\eta}(x)$ such that for all $x\in [-2,2]$:
\begin{align}
    |P_{\text{erf},k,\eta}(x)-\text{erf}(kx)|\leq \eta\;.
\end{align}
Furthermore $\deg(P) = O(\sqrt{(k^2+\log(1/\eta))\log(1/\eta)}$.
\end{prop}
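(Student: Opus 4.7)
The plan is to approximate $\text{erf}(kx)$ on $[-2,2]$ by truncating a Chebyshev series, using a Gaussian-weighted Fourier representation of erf together with the Jacobi--Anger expansion to obtain explicit formulas for the Chebyshev coefficients, and then bounding their tail via Bessel-function asymptotics. This is essentially the Low--Chuang strategy, and the odd parity of the output polynomial will be inherited automatically from the oddness of erf.

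First I would start from the identity
\begin{equation*}
\text{erf}(kx)=\frac{2}{\pi}\int_0^\infty \frac{\sin(2kx\xi)}{\xi}\,e^{-\xi^2}\,\mathrm{d}\xi,
\end{equation*}
which one checks by differentiating in $x$ and recovering $(2k/\sqrt\pi)e^{-k^2x^2}$. Next, applying the Jacobi--Anger formula $\sin(z\cos\theta)=2\sum_{m\geq 0}(-1)^m J_{2m+1}(z)\,T_{2m+1}(\cos\theta)$ with $\cos\theta=x/2$ to each $\sin(2kx\xi)$, and swapping the order of sum and integral, I obtain an odd Chebyshev expansion $\text{erf}(kx)=\sum_{m\geq 0} c_{2m+1}\,T_{2m+1}(x/2)$ whose coefficients are Gaussian-weighted integrals of Bessel functions $J_{2m+1}(4k\xi)/\xi$. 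The truncation $P_{\text{erf},k,\eta}(x):=\sum_{2m+1\leq d}c_{2m+1}T_{2m+1}(x/2)$ is then an odd polynomial of degree at most $d$, uniformly bounded on $[-2,2]$ thanks to $|T_n(x/2)|\leq 1$.

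The core estimate is a tail bound on $\sum_{2m+1>d}|c_{2m+1}|$, obtained from the super-exponential Bessel decay $|J_n(y)|\leq (e|y|/(2n))^n$, valid once $n\geq e|y|/2$. For a fixed coefficient index $n=2m+1$, this kills the integrand whenever $\xi\lesssim n/(ek)$, while for larger $\xi$ the Gaussian weight $e^{-\xi^2}$ takes over and effectively truncates the $\xi$-integral at scale $O(\sqrt{\log(1/\eta)})$. Combining these two regimes gives $|c_n|\leq \exp(-\Omega(n^2/(k^2+\log(1/\eta))))$ up to polynomial prefactors, and summing over $n>d$ then forces $d=O(\sqrt{(k^2+\log(1/\eta))\log(1/\eta)})$ to make the total tail at most $\eta$, matching the claim.

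The main obstacle will be this coefficient bound: one must split the $\xi$-integral at the correct crossover scale $\xi\sim n/(ek)$ and carry the polynomial prefactors from the Bessel estimate carefully through the sum over $m$, so that the degree emerges in the characteristic $\sqrt{(k^2+L)L}$ form with $L=\log(1/\eta)$, rather than the looser $k+L$ or $kL$ shapes. The multiplicative interplay between the Gaussian width (driven by $L$) and the onset of Bessel decay (driven by $k$) is what places both factors under a single square root, and quantitatively verifying this balance -- including handling the mild singularity $1/\xi$ near $\xi=0$, which is absorbed by $J_{2m+1}(4k\xi)\sim(2k\xi)^{2m+1}/(2m+1)!$ -- is the technical heart of the argument.
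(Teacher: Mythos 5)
This proposition is not proved in the paper at all: it is imported verbatim by citation from Low and Chuang, so there is no in-paper argument to compare your proposal against. Judged on its own merits, your sketch is a sound reconstruction of the cited result and follows the standard route, with one cosmetic variation: Low--Chuang work from the integral representation $\mathrm{erf}(kx)=\tfrac{2k}{\sqrt{\pi}}\int_0^x e^{-k^2t^2}\,\mathrm{d}t$ and integrate a Jacobi--Anger/Chebyshev approximation of the Gaussian term by term, whereas you apply Jacobi--Anger directly to the sine kernel of a Fourier representation of $\mathrm{erf}$; both yield odd Chebyshev coefficients expressed through Bessel functions with the same tail behaviour. Your identity $\mathrm{erf}(kx)=\tfrac{2}{\pi}\int_0^\infty \xi^{-1}\sin(2kx\xi)e^{-\xi^2}\,\mathrm{d}\xi$ checks out by differentiation, the rescaling $T_{2m+1}(x/2)$ correctly handles the domain $[-2,2]$, and the two-regime splitting of the $\xi$-integral does produce a tail that forces $d\gtrsim \log(1/\eta)$ from the Bessel-decay regime and $d\gtrsim k\sqrt{\log(1/\eta)}$ from the Gaussian regime, whose sum is equivalent to the claimed $\sqrt{(k^2+L)L}$ with $L=\log(1/\eta)$. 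One small inaccuracy: the intermediate coefficient bound you state, $|c_n|\leq\exp(-\Omega(n^2/(k^2+L)))$, is not quite what the splitting gives; the natural output is $|c_n|\lesssim 2^{-n}+e^{-\Omega(n^2/k^2)}$, i.e.\ $\exp(-\Omega(\min(n,\,n^2/k^2)))$. This does not affect the final degree, since summing that tail below $\eta$ still requires exactly $d=O(L+k\sqrt{L})=O(\sqrt{(k^2+L)L})$, but if you write the argument out in full you should carry the $\min$ form rather than the single-fraction form.
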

\begin{defn}[Semi-Pellian polynomial construction] For $\tau,\eta \in [0,1]$, $k>0$ and $a_\text{mid}\in [0,1/2]$, let us define
    \begin{align*}
	P_{\tau,\eta,k,a_\text{mid}}(a) &= f_0(a-a_\text{mid})+f_0(-a-a_\text{mid}),\\
	\text{where} \quad f_0(x) & := \frac{1+\eta+P_{\text{erf},k,\eta}(x)}{4\eta+\tau+2}.
\end{align*}
In the following, the dependence on $\tau,\eta,k,a_\text{mid}$ is implicitly understood and we simply denote $P_{\tau,\eta,k,a_\text{mid}}(a)$ by $P(a)$.
\end{defn}
Let us now show the following results, which constitute the proof of correctness for our new choice of parameters.
\begin{prop}
    Let $0\leq a_\text{min}<a_\text{max}\leq 1$ and $\Delta = a_\text{max}-a_\text{min}$. For $\eta=\tau=0.01$, $k=\frac{1}{2}\kappa(\tau)\Delta^{-1}$ and $\gamma=0.01$ we have
\begin{align}
	P(a)\leq 0.5-\gamma\quad \forall a\in [a_\text{min},a_\text{min}+0.1\Delta] \;,\\
	P(a)\geq 0.5+\gamma\quad \forall a\in [a_\text{max}-0.1\Delta,a_\text{max}]\;.
\end{align}
\end{prop}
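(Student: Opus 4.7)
The plan is to bypass Rall--Fuller's Lemma~1 entirely and verify the bound \eqref{desired_bound} by a direct calculation with the explicit semi-Pellian construction. Substituting the approximation $P_{\text{erf},k,\eta}(x) = \text{erf}(kx) + \varepsilon(x)$ with $|\varepsilon(x)|\leq \eta$ into the definition of $f_0$, and using the oddness of $\text{erf}$, one obtains
\begin{align*}
P(a) \;=\; \frac{2 + 2\eta + \text{erf}\bigl(k(a - a_\text{mid})\bigr) - \text{erf}\bigl(k(a + a_\text{mid})\bigr) + E(a)}{4\eta + \tau + 2},\qquad |E(a)|\leq 2\eta.
\end{align*}
With $\eta=\tau=0.01$ the denominator equals $2.05$, and with $k=\tfrac{1}{2}\kappa(\tau)\Delta^{-1}$ one has $k\Delta = \tfrac{1}{2}\kappa(\tau)\approx 1.046$ since $\kappa(0.01)\approx 2.093$. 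Crucially, the hypothesis $a_\text{mid}\geq \kappa/k$ from Lemma~1 is \emph{not} assumed; instead the term $\text{erf}(k(a+a_\text{mid}))$, which Lemma~1 absorbed by approximating it as $1$, is kept explicit.

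The right subinterval $a\in[a_\text{max}-0.1\Delta,a_\text{max}]$ is the easy case: $k(a-a_\text{mid})\geq 0.2\kappa(\tau)\approx 0.419$ forces $\text{erf}(k(a-a_\text{mid}))\geq\text{erf}(0.419)\approx 0.447$, and $\text{erf}(k(a+a_\text{mid}))\leq 1$ trivially, so the numerator is at least $2+2\eta+0.447-1-2\eta=1.447$ and $P(a)\geq 0.705$, well above $0.5+\gamma=0.51$.

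The left subinterval is the substantive case because the ``symmetrization penalty'' $-\text{erf}(k(a+a_\text{mid}))$ is no longer close to $-1$ when $a_\text{mid}$ is small. I would first reduce to the worst case $a_\text{min}=0$: increasing $a_\text{min}$ only shifts $a+a_\text{mid}$ upward and hence makes $P(a)$ smaller. Writing $a=\alpha\Delta$ with $\alpha\in[0,0.1]$ gives $u(\alpha):=k(a-a_\text{mid})=(\alpha-\tfrac{1}{2})\kappa(\tau)/2$ and $v(\alpha):=k(a+a_\text{mid})=(\alpha+\tfrac{1}{2})\kappa(\tau)/2$; since $|u(\alpha)|\leq |v(\alpha)|$ throughout this range, the identity $\frac{d}{d\alpha}[\text{erf}(u)-\text{erf}(v)] = \frac{\kappa}{\sqrt{\pi}}\bigl(e^{-u^2}-e^{-v^2}\bigr)\geq 0$ shows that $\alpha\mapsto\text{erf}(u(\alpha))-\text{erf}(v(\alpha))$ is nondecreasing, so its maximum on $[0,0.1]$ is attained at $\alpha=0.1$ and equals $-[\text{erf}(0.419)+\text{erf}(0.628)]\approx -1.07$. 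Hence the numerator is bounded above by $2+4\eta-1.07\approx 0.97$, giving $P(a)\leq 0.473 < 0.49 = 0.5-\gamma$. The only real obstacle is numerical rigour: the slack between $0.473$ and $0.49$ is modest, so one must replace the rounded values of $\kappa(0.01)$, $\text{erf}(0.419)$, and $\text{erf}(0.628)$ used in this sketch with honest two-sided bounds (easily obtained from the series or from standard monotone quadrature).
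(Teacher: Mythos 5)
Your proof is correct and follows essentially the same route as the paper's: bound $P(a)$ via the $\eta$-approximation of $\mathrm{erf}$, use the monotonicity of $\mathrm{erf}(k(a-a_\text{mid}))-\mathrm{erf}(k(a+a_\text{mid}))$ to reduce to the endpoints $a_\text{min}+0.1\Delta$ and $a_\text{max}-0.1\Delta$, and finish with the same numerical evaluations ($\approx 0.472$ and $\approx 0.705$ against the thresholds $0.49$ and $0.51$). The one point where you are slightly more careful than the paper is the explicit reduction to the worst case $a_\text{min}=0$ on the left subinterval, which the paper's evaluation $F=\mathrm{erf}(-0.2\kappa)+\mathrm{erf}(-0.3\kappa)$ implicitly assumes.
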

\begin{proof}
     When $a$ belongs to the left interval  $[a_\text{min},a_\text{min}+0.1\Delta]$, we use the fact that $P_{\text{erf},k,\eta}(x)$ is an approximation of the error function with precision $\eta$ to obtain the following upper bound for $P$
\begin{align*}
	P(a) \leq \frac{2+4\eta+\text{erf}(k(a-a_\text{mid}))+\text{erf}(k(-a-a_\text{mid}))}{4\eta+\tau+2}\;.
\end{align*}
Note that $F(a) := \text{erf}(k(a-a_\text{mid}))+\text{erf}(k(-a-a_\text{mid}))$ is an increasing function for $a>0$ since
\begin{align*}
	F'(a) = \frac{2k}{\sqrt{\pi}}\big[e^{-k^2(a-a_\text{mid})^2}-e^{-k^2(a+a_\text{mid})^2}\big]>0\;.
\end{align*}
It follows that the upper bound achieves it's maximum value at $a=a_\text{min}+0.1\Delta$ where $F(a) = \text{erf}(-0.2\kappa(\tau))+\text{erf}(-0.3\kappa(\tau))$. We obtain with direct calculation that
\begin{align*}
	P(a)\leq \frac{2+4\eta+\text{erf}(-0.2\kappa)+\text{erf}(-0.3\kappa)}{4\eta+\tau+2}\approx 0.472<0.5-\gamma\;.
\end{align*}
 For $a\in [a_\text{max}-0.1\Delta,a_\text{max}]$, the same arguments lead to 
\begin{align*}
	P(a) \geq \frac{2+F(a)}{4\eta+\tau+2} \geq \frac{2+F(a_\text{max}-0.1\Delta)}{4\eta+\tau+2} > \frac{2+\text{erf}(0.2\kappa(\tau))-1}{4\eta+\tau+2} \approx 0.70541> 0.5+\gamma\;.
\end{align*}
where in the third inequality we have used the fact that the error function is always greater than $-1$ to get 
\begin{align}
    F(a_\text{max}-0.1\Delta) = \text{erf}(0.2\kappa(\tau))+ \text{erf}(-k(a_\text{max}-0.1\Delta+a_\text{mid}))> \text{erf}(0.2\kappa(\tau))-1\;.
\end{align}

\end{proof}
\begin{prop}
    Let $0\leq a_\text{min}<a_\text{max}\leq 1$, $\Delta = a_\text{max}-a_\text{min}$ and $\beta\in [0,1]$. For $\eta=\tau=0.01\Delta^\beta$, $k=2\kappa(\tau)\Delta^{-(1-\beta)}$ and $\gamma=0.01\Delta^\beta$, assume that $\Delta^\beta\leq 0.4$ and $a_\text{mid}\geq \frac{1}{2}\Delta^{1-\beta}$ then the conditions of \cref{Rall-Fuller lemma 1} hold, namely $k\in [1,2/\Delta]$ and $a_\text{mid}\geq \kappa(\tau)/k$.
\end{prop}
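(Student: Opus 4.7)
The proposition reduces to three numerical inequalities: $k\geq 1$, $k\leq 2/\Delta$, and $a_\text{mid}\geq \kappa(\tau)/k$. My plan is to dispose of the first and third essentially by substitution, and then to handle the upper bound on $k$ by a short monotonicity argument combined with one numerical check.

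The condition $a_\text{mid}\geq \kappa(\tau)/k$ is free: substituting $k = 2\kappa(\tau)\Delta^{-(1-\beta)}$ immediately gives $\kappa(\tau)/k = \tfrac{1}{2}\Delta^{1-\beta}$, which matches the standing hypothesis $a_\text{mid}\geq \tfrac{1}{2}\Delta^{1-\beta}$ exactly. The lower bound $k\geq 1$ is similarly routine. After substitution it reads $\kappa(\tau)\geq \tfrac{1}{2}\Delta^{1-\beta}$; since $\Delta\leq 1$ the right-hand side is at most $1/2$, while the hypothesis forces $\tau = 0.01\Delta^\beta \leq 0.004$, so that $\kappa(\tau)\geq \kappa(0.004)\approx 2.3$ by monotonicity of $\tau\mapsto \kappa(\tau)$. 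This leaves a comfortable margin.

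The main obstacle is the upper bound $k\leq 2/\Delta$, which after substitution becomes $\kappa(\tau)\Delta^\beta \leq 1$. Using $\Delta^\beta = 100\tau$, this is equivalent to $g(\tau):=\tau\kappa(\tau)\leq 10^{-2}$. I would handle this in two steps. First, differentiate $g(\tau)^2 = \tfrac{\tau^2}{2}\ln(2/(\pi\tau^2))$; the resulting derivative has the sign of $\ln(2/\pi) - 2\ln\tau - 1$, which is positive whenever $\tau\leq \sqrt{2/(\pi e)}\approx 0.48$, a range that comfortably contains ours. Hence $g$ is monotonically increasing on $(0,0.004]$, and it suffices to evaluate $g$ at the right endpoint: a direct computation gives $g(0.004) = 0.002\,\sqrt{2\ln(2/(\pi\cdot 0.004^2))}\approx 0.0092$, just below the threshold $10^{-2}$.

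This last step is the tight part of the argument: the numerical value $0.0092$ is what forces the somewhat ugly constant $\Delta^\beta \leq 0.4$ in the hypothesis, since any appreciably larger constant would push $g(\tau)$ above $10^{-2}$ and break the bound $k\leq 2/\Delta$. No deeper ideas are needed; the whole verification is a short chain of substitutions closing with a one-variable monotonicity check, and the only subtlety to be careful about is the sharp choice of the cutoff $0.4$.
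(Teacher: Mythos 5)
Your proof is correct and follows essentially the same route as the paper: the condition $a_\text{mid}\geq\kappa(\tau)/k$ holds by direct substitution, and $k\leq 2/\Delta$ reduces to $\tau\kappa(\tau)\leq 0.01$, which is settled by a one-variable monotonicity argument on $(0,0.004]$ followed by the endpoint evaluation $\tau\kappa(\tau)\big|_{\tau=0.004}\approx 0.0092$ (you differentiate $\left(\tau\kappa(\tau)\right)^2$ in $\tau$ where the paper differentiates $\ln\left(\tau\kappa(\tau)\right)$ in $\xi=-\ln\tau$, but these are equivalent). The one substantive difference is that you also verify $k\geq 1$ via $\kappa(\tau)\geq\kappa(0.004)\approx 2.3>\tfrac{1}{2}\Delta^{1-\beta}$, a condition the paper's proof leaves unaddressed, so your write-up is in fact slightly more complete.
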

\begin{proof}
The condition $a_\text{mid}\geq \kappa(\tau)/k$ is trivially satisfied since $\kappa(\tau)/k = \frac{1}{2}\Delta^{1-\beta}$. The condition $k\leq 2/\Delta$ is equivalent to 
\begin{align}
    u(\tau):=\tau \kappa(\tau) \leq 0.01\; .
\end{align}
We will show that this always holds given that $\tau\leq 0.004$, which follows from the assumption that $\Delta^\beta\leq 0.4$. To remind $\kappa(\tau) = \frac{1}{2}\sqrt{2\ln(2/(\pi\tau^2))}$, let $\xi := - \ln \tau$ then 
\begin{align}
    \ln \kappa(\xi) = -\frac{1}{2}\ln 2 +\frac{1}{2}\ln ( \ln (2/\pi) + 2\xi)\;.
\end{align}
It follows that for $\xi>-\ln 0.004$
\begin{align}
    \frac{d\ln u}{d\xi} = -1+\frac{1}{\ln (2/\pi) + 2\xi}<0 \;.
\end{align}
That is, $u$ is a decreasing function of $\xi$ and thus an increasing function of $\tau$. Since $u(0.004)\approx 0.0092$, we conclude that $u(\tau)\leq 0.01$ for $\tau\leq 0.004$.
\end{proof}

\underline{\textbf{Modified parameter setting for Rall-Fuller's algorithm}}

\begin{itemize}
		\item If  $a_\text{mid}\geq \frac{1}{2}\Delta_t^{1-\beta}$ and $\Delta_t^{\beta}\leq 0.4$, set:
		\begin{align}
			\eta_t,\tau_t &:= 0.01\Delta_t^\beta\;,\\\
			k_t &:=2\kappa(\tau_t)\Delta_t^{-(1-\beta)}\;,\\
			\gamma_t &:= 0.01\Delta_t^{\beta}\;.
		\end{align}
		
		\item Else, set:
		\begin{align}
			\eta_t,\tau_t &:= 0.01\;,\\
			k_t &:=\frac{1}{2}\kappa(\tau_t)\Delta_t^{-1}\;,\\
			\gamma_t &:= 0.01\;.
		\end{align}
\end{itemize}
Since the newly introduced condition $\Delta_t^\beta\leq 0.4$ adds at most a constant number of queries, \cref{Rall-Fuller theorem} remains valid.
\subsection{The limitation of Rall-Fuller's algorithm}
\label{subsec: Rall-Fuller limitation}
As shown in equations \eqref{Rall-Fuller D} and \eqref{Rall-Fuller N}, the maximum depth and the total query complexity of Rall-Fuller's algorithm have a dependence on $a$. In this subsection we discuss this dependence in more detail. As mentioned previously, the low-degree semi-Pellian polynomial construction is only possible when the mid point of the confidence interval is sufficiently larger than the length of the confidence interval itself i.e. when
$a_\text{mid}^{(t)}\geq \frac{1}{2}\Delta_t^{1-\beta}$. Since $a_\text{mid}^{(t)}\geq a- \Delta_t/2$, this will be guaranteed if $a\geq \Delta_t^{1-\beta}$, that is, when $t\geq t^* := \lceil\log_{0.9}a\rceil/(1-\beta)$. The threshold $t^*$ plays a central role in Rall-Fuller's algorithm as it divides the algorithm into two phases. In the beginning, the length of the confidence interval is large compared to its  mid point, the low-degree polynomial construction is not possible and the depth scales as inverse of the confidence interval length as in a standard QAE. Then, the confidence interval shrinks small enough and we switch to the low-depth phase. Depending on the values of $a$ and $\epsilon$, it might happen that the algorithm terminates prematurely before it can transition to the second phase and thus fails to be low-depth. Since $T=\lceil \log_{0.9} \epsilon\rceil$, this happens when $\epsilon > a^{1/(1-\beta)}$  or equivalently when $D<O(1/a)$. For example, if $a=0.1$ and $\beta=1/2$ then the algorithm becomes a standard QAE for all $\epsilon> 0.01$. This threshold quickly becomes very small for larger values of $\beta$, for $\beta=3/4$, the algorithm fails to be low-depth for all $\epsilon > 10^{-4}$. Since large values of $\beta$ correspond to shallow hardware (see \cref{fig:beta}), this constitutes a bottleneck in Rall-Fuller's algorithm.
\bibliographystyle{unsrt}
\bibliography{main}
\end{document}